\newtheorem{theorem}{Theorem}[section]
\newtheorem{proposition}[theorem]{Proposition}
\newtheorem{lemma}[theorem]{Lemma}
\newtheorem{corollary}[theorem]{Corollary}
\newtheorem{remark}[theorem]{Remark}
\newcommand{\setmid}{\,|\,}
\newcommand{\Setmid}{\,\Big|\,}
\newcommand{\N}{{\mathbb{N}}}
\newcommand{\Z}{{\mathbb{Z}}}
\newcommand{\R}{{\mathbb{R}}}
\newcommand{\A}{{A_\varepsilon}}
\newcommand{\Expec}{{\mathbb{E}}}
\newcommand{\Ap}{\mathcal{C}}
\newcommand{\EN}{\mathcal{N}}
\newcommand{\disc}{d^*_\infty}
\DeclareMathOperator{\globalbest}{\textsc{global}} \DeclareMathOperator{\current}{\textsc{current}} \newcommand{\thresh}{\texttt{Thresholds}}
\newcommand{\basic}{\texttt{TA\_basic}}
\newcommand{\improved}{\texttt{TA\_improved}}
\definecolor{linkblue}{rgb}{0.1,0.1,0.8}
\title{
A New Randomized Algorithm to Approximate the Star Discrepancy 
Based on Threshold Accepting
}
\author{
Michael Gnewuch\thanks{Kiel University, 24098 Kiel, Germany.} 
\and Magnus Wahlstr\"{o}m\thanks{Max-Planck-Institut f{\"u}r Informatik, 66123 Saarbr\"ucken, Germany.}
\and Carola Winzen\footnotemark[2]
}
\date{\today}
\begin{document}
\maketitle

\begin{abstract}
We present a new algorithm for estimating the star discrepancy of 
arbitrary point sets. Similar to the algorithm for discrepancy 
approximation of Winker and Fang [SIAM J.~Numer. Anal. 34 (1997), 
2028--2042] it 
is based on the optimization algorithm threshold accepting. 
Our improvements include, amongst others, a non-uniform sampling 
strategy which is more suited for higher-dimensional inputs, and  
rounding steps which transform axis-parallel boxes, on which the 
discrepancy is to be tested, into 
\emph{critical test boxes}.
These critical test boxes provably yield higher discrepancy values,
and contain the box that exhibits the maximum value of the local 
discrepancy.
We provide comprehensive experiments to test the new algorithm.
Our randomized algorithm computes the exact
discrepancy frequently in all cases where this can be checked (i.e., where the exact discrepancy 
of the point set can be computed in feasible time). 
Most importantly, in higher dimension the new method behaves 
clearly better than all previously known methods.
\end{abstract}

\sloppy{
\section{Introduction}\label{sec:introduction}
Discrepancy theory analyzes the 
irregularity 
of point distributions 
and has considerable theoretical and practical 
relevance. There are many different discrepancy notions with a 
wide range of applications as in optimization, combinatorics,
pseudo random number generation, option pricing,
computer graphics, and other areas, see, e.g., the monographs
\cite{BC87, Cha00, DP10, DT97, FW94, Lem09, Mat99, Nie92, NW10}.

In particular for the important task of multivariate or infinite dimensional
numerical integration, which arises frequently in fields such as finance, statistics, 
physics or quantum chemistry, quasi-Monte
Carlo algorithms relying on low-discrepancy samples have extensively been studied in the last decades.
For several classes of integrands the error of quasi-Monte Carlo approximation 
can be expressed in terms of the discrepancy
of the set of sample points. This is put into a quantitative form by inequalities of Koksma-Hlawka- or 
Zaremba-type, see, e.g., 
\cite{DP10, Gne11, NW10} and the literature mentioned therein. 
The essential point here is that a set of sample points with small discrepancy results 
in a small integration error.

Of particular interest are the star discrepancy
and the weighted star discrepancy, which we define below. 
For theoretical and practical reasons the weighted star discrepancy attracted more and more attention 
over the last few years, see, e.g., \cite{DLP05, HPS08, Joe06, SJ07}. 
In particular, it is very promising for finance applications, see \cite{Slo10}. 

Let $X=(x^i)^n_{i=1}$ be a finite sequence in the $d$-dimensional 
(half-open) unit cube $[0,1)^d$. For $y=(y_1,\ldots,y_d)\in [0,1]^d$
let $A(y,X)$ be the number of points of $X$ lying in the $d$-dimensional
half-open subinterval $[0,y):=[0,y_1)\times\cdots\times[0,y_d)$, and let
$V_y$ be the $d$-dimensional (Lebesgue) volume of $[0,y)$ . We call
\begin{equation*}
d^*_\infty(X) := \sup_{y\in (0,1]^d} \big| V_y - \tfrac{1}{n}
A(y,X) \big|
\end{equation*}
the \emph{$L^\infty$-star discrepancy}, or simply the 
\emph{star discrepancy} of $X$.

For a subset $u\subseteq \{1,\ldots,d\}$ define $\Phi_u: [0,1]^d \to [0,1]^{|u|}, y \mapsto (y_i)_{i\in u}$. For a finite sequence of
non-negative weights $(\gamma_u)_{u\subseteq \{1,\ldots,d\}}$ the \emph{weighted star
discrepancy} of $X$ is defined by 
\begin{equation*}
 d^*_{\gamma, \infty} (X) := \sup_{\emptyset\neq u\subseteq \{1,\ldots,d\}} \gamma_u d^*_\infty 
(\Phi_u(X)).
\end{equation*}
Obviously, the star discrepancy is a special instance of the weighted star
discrepancy.
Further important discrepancy measures are, e.g., the \emph{$L^p$-star discrepancies}
\begin{equation*}
d^*_p(X) := \left( \int_{[0,1]^d} \left| V_y - \frac{1}{n}
A(y,X) \right|^p\,dy \right)^{1/p}, \hspace{2ex} 1 \le p <\infty,
\end{equation*}
and weighted versions thereof. In this article we focus on algorithms to
approximate the star discrepancy, but  
note that these can be used as bases for 
algorithms to approximate the weighted star discrepancy.  

In many applications it is of interest to measure the quality of 
certain sets
by calculating their (weighted or unweighted) star discrepancy, e.g., 
to test whether 
successive pseudo
random numbers are statistically independent \cite{Nie92}, or whether 
given
sample sets are suitable for multivariate numerical integration of 
certain
classes of integrands. 
As explained in 
\cite{DGKP08}, the 
fast calculation or 
approximation of the (weighted) star discrepancy would moreover allow  
efficient randomized semi-constructions of low-discrepancy samples of moderate
size (meaning at most polynomial in the dimension $d$). 
Actually, there are derandomized algorithms known to construct 
such samples deterministically
\cite{DGKP08, DGW09, DGW10}, but these exhibit high running times. 
Therefore, 
efficient semi-constructions would be helpful to avoid the costly 
derandomization procedures.
The critical step in the semi-construction is 
the efficient calculation (or approximation)
of the discrepancy of a randomly chosen set.

The $L^2$-star discrepancy of a given $n$-point set 
in dimension $d$
can be computed with the help of Warnock's formula \cite{War72} with $O(dn^2)$
arithmetic operations. Heinrich and Frank provided an 
asymptotically faster algorithm using $O(n(\log n)^{d-1})$ 
operations for fixed $d$ \cite{FH96, Hei96}. 
Similarly efficient algorithms are not known for the 
star discrepancy (and thus also not for the more general weighted star discrepancy).
In fact it is known that the problem of calculating the star discrepancy of arbitrary
point sets is an $NP$-hard problem \cite{GSW09}. Furthermore, it was shown 
recently that it is also a $W[1]$-hard problem with respect to the parameter $d$ \cite{GKWW10}.
So it is not very surprising that all known algorithms for calculating the star discrepancy
or approximating it up to a user-specified error exhibit running times
exponential in $d$, see \cite{DEM96, Gne08a, Thi01a, Thi01b}. 
Let us have a 
closer look at the problem: 
For a finite sequence 
$X=(x^i)^n_{i=1}$ in $[0,1)^d$ and for $j\in\{1,\ldots,d\}$ we define
\begin{equation*}
\Gamma_j(X) = \{x^i_j \,|\, \ i \in \{1,...,n \} \}
\hspace{2ex}\text{and}\hspace{2ex}
\bar{\Gamma}_j(X) = \Gamma_j(X) \cup \{1\},
\end{equation*}
and the grids
\begin{equation*}
\Gamma(X) = \Gamma_1(X) \times \cdots \times \Gamma_d(X)
\hspace{2ex}\text{and}\hspace{2ex}
\bar{\Gamma}(X) = \bar{\Gamma}_1(X) \times \cdots \times \bar{\Gamma}_d(X).
\end{equation*}
Then we obtain
\begin{equation}
\label{disfor}
d^*_\infty(X) = \max \ \left\{ \ \max_{y \in \bar{\Gamma}(X)} 
\left( V_y - \frac{1}{n} A(y,X) \right)\ ,\ \max_{y \in \Gamma(X)} 
\left(\frac{1}{n} \bar{A}(y,X) - V_y \right) \right\},
\end{equation}
where $\bar{A}(y,X)$ denotes the number of points of $X$ lying in the
closed $d$-dimensional subinterval $[0,y]$. 
(For a proof see \cite{GSW09} or \cite[Thm.~2]{Nie72a}.)
Thus, an enumeration algorithm would provide us with the exact value
of $d^*_\infty(X)$. But since the cardinality of the grid $\Gamma(X)$ for
almost all $X$ is $n^d$, such an algorithm would
be infeasible for large values of $n$ and $d$.

Since no efficient algorithm for the exact calculation 
or approximation 
of the star discrepancy 
up to a user-specified error 
is likely to exist, 
other authors tried to
deal with this large scale integer programming problem by using 
optimization heuristics. In \cite{WF97}, Winker and Fang
used threshold accepting to find lower bounds for the star discrepancy. Threshold accepting \cite{Dueck} is a refined randomized local search 
algorithm based on a similar idea as the simulated annealing algorithm
\cite{KGV83}.
In \cite{Thi01b}, Thi\'emard gave an integer linear programming 
formulation
for the problem and used techniques as cutting plane generation and
branch and bound to tackle it (cf. also \cite{GSW09}).
Quite recently, Shah proposed a genetic algorithm to calculate
lower bounds for the star discrepancy \cite{Sha10}.

Here in this paper we present a new randomized algorithm 
to approximate
the star discrepancy. As the algorithm of Winker and Fang ours is based on 
threshold accepting but adds more problem specific knowledge to it. 
The paper is organized as follows. In Section \ref{sec:previous}
we describe the algorithm of Winker and Fang. In Section \ref{sec:basic}
we present a first version of our algorithm. The most important difference
to the algorithm of Winker and Fang is a new non-uniform sampling strategy that
takes into account the influence of the dimension $d$ and topological characteristics 
of the given point set. In Section 
\ref{sec:improved} we introduce the concept of critical test boxes,
which are the boxes that lead to the largest discrepancy values, including the maximum value.
We present rounding procedures which transform given test boxes into critical test 
boxes. With the help of these procedures and some other modifications, our algorithm 
achieves even better results. 
However, this precision comes at at the cost of larger running times (roughly
a factor two, see Table~\ref{tab:time} in Section~\ref{sec:experiments}). In Section \ref{sec:analysis} we analyze the new sampling strategy and the
rounding procedures in more depth. We provide
comprehensive numerical tests in Section \ref{sec:experiments}. The results indicate 
that our new algorithm is superior to all other known methods, 
especially in higher dimensions. The appendix
contains some technical 
results necessary for our theoretical analyses in Section 
\ref{sec:analysis}.

\section{The Algorithm of Winker and Fang}
\label{sec:previous}
\subsection{Notation}
\label{subsec:notation}
In addition to the notation introduced above, we make use of the following conventions. 

For all positive integers $m\in\N$ we put $[m]:= \{1,\ldots,m\}$. 
If $r\in\R$, let 
$\lfloor r \rfloor := \max\{n\in\Z \setmid n \leq r\}$. 
For the purpose of readability we sometime omit the $\lfloor \cdot \rfloor$ sign, i.e., whenever we write $r$ where an integer is required, we implicitly mean $\lfloor r \rfloor$.

For general $x, y\in [0,1]^d$ we write $x\leq y$ if 
$x_j \leq y_j$ for all $j\in [d]$ and, equivalently, $x <y$ if $x_j < y_j$ for all $j\in [d]$.
The characteristic
function $1_{[0,x)}$ is defined on $[0,1]^d$ by
$1_{[0,x)}(y) := 1$ if $y < x$ and 
$1_{[0,x)}(y) := 0$ otherwise.
We use corresponding conventions for the closed $d$-dimensional box
 $[0,x]$. 

For a given sequence $X=(x^i)_{i=1}^n$ in the $d$-dimensional unit cube $[0,1)^d$, we define the following functions. 
For all $y\in [0,1]^d$ we set
\begin{align*}
&\delta(y) := \delta(y,X) 
:= V_y - A(y,X) 
= V_y -
\frac{1}{n} \sum^n_{k=1} 1_{[0,y)}(x^k) \,,\\
&\bar{\delta}(y) 
:= \bar{\delta}(y,X) 
:= \bar{A}(y,X)-V_y
= \frac{1}{n} \sum^n_{k=1} 1_{[0,y]}(x^k) - V_y\,,
\end{align*}
and $\delta^*(y) := \delta^*(y, X) :=\max\big\{\delta(y), \bar{\delta}(y) \big\}$.
Then $\disc(X)=\max_{y \in \bar{\Gamma}(X)}{\delta^*(y)}$ as discussed in the introduction.

\subsection{The algorithm of Winker and Fang}
\label{subsec:WF}
\label{SUBSEC:WF}

Threshold accepting is an integer optimization heuristic introduced
by Dueck and Scheuer in~\cite{Dueck}. 
Alth\"ofer and Koschnik~\cite{AlK} showed that for suitably chosen parameters, threshold accepting converges to a global optimum if the number 
$I$ of iterations tends to infinity. 
Winker and Fang~\cite{WF97} applied threshold accepting 
to compute the star discrepancy of a given $n$-point configuration.
In the following, we give a short presentation of their algorithm. 
A flow diagram of the algorithm can be found in~\cite{WF97}.

\textbf{Initialization:} The heuristic starts with choosing uniformly at random a starting point 
$x^c\in \bar{\Gamma}(X)$ and calculating $\delta^*(x^c) = \max\{\delta(x^c), 
\overline{\delta}(x^c)\}$. Note that throughout the description of the algorithm, $x^c$ denotes the \emph{currently} used search point.

\textbf{Optimization:}
A number $I$ of iterations is performed. In the $t$-th 
iteration, the algorithm chooses a point $x^{nb}$ uniformly at random from a
given \emph{neighborhood} $\EN(x^c)$ of $x^c$ and calculates $\delta^*(x^{nb})$. 
It then computes
$\Delta \delta^* := \delta^*(x^{nb}) - \delta^*(x^c)$. If $\Delta \delta^* \geq T$ for 
a given (non-positive) threshold value $T$, then $x^c$ is updated, i.e, the algorithm sets $x^c := x^{nb}$.
With the help of the non-positive threshold it shall be avoided to 
get stuck in a bad local maximum $x^c$ of $\delta^*$---``local'' 
with respect to the underlying neighborhood definition. 
The threshold value $T$ changes during the run of the algorithm and ends up at zero.
This should enforce the algorithm to end up at a local
maximum of $\delta^*$ which is reasonably close to $\disc(X)$. 

\textbf{Neighborhood Structure:}
Let us first give the neighborhood definition used in \cite{WF97}. 
For this purpose, let $x \in \bar{\Gamma}(X)$ be given. Let $\ell < n/2$
be an integer and put $k:= 2\ell + 1$.
We allow only a certain number of coordinates to change by 
fixing a value $mc\in [d]$ and choosing $mc$ coordinates 
$j_1,\ldots,j_{mc}\in [d]$ uniformly at random. 
For 
$j \in \{j_1,\ldots,j_{mc}\}$
we consider the set of grid coordinates
\begin{equation*}
\EN_{k,j}(x):=
\Big\{ \gamma\in \bar{\Gamma}_j(X) \Setmid 
\max\{1, \phi^{-1}_j(x_j) - \ell \} \leq \phi^{-1}_j(\gamma)
\leq \min\{|\bar{\Gamma}_j (X)|, \phi^{-1}_j(x_j) + \ell \}\, \Big\},
\end{equation*}
where $\phi_j: [ |\bar{\Gamma}_j(X)| ]\to \bar{\Gamma}_j(X)$ is the ordering of the 
set $\bar{\Gamma}_j(X)$, i.e., $\phi_j(r) < \phi_j(s)$ for $r < s$.
The \emph{neighborhood $\EN^{j_1,\ldots,j_{mc}}_k(x)$ of $x$ of order $k$} is the Cartesian product
\begin{equation}
\label{neigWF}
N^{j_1,\ldots,j_{mc}}_k(x) := \hat{\EN}_{k,1}(x) \times \ldots \times \hat{\EN}_{k,d}(x)\,,
\end{equation}
where $\hat{\EN}_{k,j}(x) = \EN_{k,j}(x)$ for $j\in \{j_1,\ldots,j_{mc}\}$ and
$\hat{\EN}_{k,j}(x) =\{x_j\}$ otherwise. Clearly, $|N^{j_1,\ldots,j_{mc}}_k(x)| \leq (2 \ell +1)^{mc}$.
We abbreviate $\EN^{mc}_{k}(x):=\EN^{j_1,\ldots , j_{mc}}_{k}(x)$ if $j_1,\ldots , j_{mc}$ are 
$mc$ coordinates chosen uniformly at random.

\textbf{Threshold values:} Next, we explain how the threshold sequence is chosen in~\cite{WF97}. The following procedure is 
executed prior to the algorithm itself.
Let $I$ be the total number of iterations to be performed by the algorithm and let $k \leq n$ and $mc \leq d$ be fixed. 
For each $t \in [ \sqrt{I}]$, the procedure computes a pair $(y^t, \tilde{y}^t)$, 
where $y^t \in \bar{\Gamma}(X)$ is chosen uniformly at random
and $\tilde{y}^t \in \EN^{mc}_k(y^t)$, again chosen uniformly at random. 
It then calculates the values $T(t):=-|\delta^*(y^t) - \delta^*(\tilde{y}^t)|$. 
When all values $T(t)$, $t=1,\ldots,\sqrt{I}$, have been computed, the algorithm sorts them in increasing order. For a given $\alpha \in (0.9, 1]$, the $\alpha\sqrt{I}$ values closest to zero are selected as threshold sequence.
The number $J$ of iterations performed for each threshold value is 
$J = \alpha^{-1}\sqrt{I}$.

\section{A First Improved Algorithm -- \basic} 
\label{sec:basic}
Our first algorithm, \basic, builds on the algorithm of winker and Fang as presented in the 
previous section. 
A preliminary, slightly different version of \basic\ can be found in~\cite{Win07}. This version was used in~\cite{DGW10} to provide lower bounds for the comparison of the star discrepancies of different point 
sequences. In particular in higher dimensions it performed better than any other method tested by the authors.

Recall that the algorithm of winker and Fang employs a uniform probability distribution on 
$\bar{\Gamma}(X)$ and the neighborhoods $\EN^{mc}_{k}(x)$ for all random decisions. 

Firstly, this is not appropriate for higher-dimensional inputs: In any dimension $d$ it is most
likely that the discrepancy of a set $X$ is caused by test boxes with volume at least $c$,
$c$ some constant in $(0,1)$. Thus in higher dimension $d$ we expect the upper right corners
of test boxes with large local discrepancy to have coordinates at least $c^{1/d}$.
Thus it seems appropriate for higher dimensional sets $X$ to increase the weight of those points in the grid $\bar{\Gamma}(X)$ with larger coordinates whereas we decrease the weight of the points with small coordinates.

Secondly, a uniform probability distribution does not take into account
the topological characteristics of the point set $X$ as, e.g., distances between the points in
the grid $\bar{\Gamma}(X)$: If there is a grid cell $[x,y]$ in $\bar{\Gamma}(X)$
(i.e., $x,y\in \bar{\Gamma}(X)$ and $\phi^{-1}_j(y_j) = \phi^{-1}_j(x_j)+1$ for all 
$j\in [d]$, where $\phi_j$ is again the ordering of the set $\bar{\Gamma}_j(X)$)
with large volume, we would expect that $\bar{\delta}(x)$ or $\delta(y)$ are also 
rather large.

Thus, on the one hand, it seem better to consider a modified probability measure on 
$\bar{\Gamma}(X)$ which accounts for the influence of the dimension and the topological
characteristics of $X$. On the other hand, if $n$ and $d$ are large, we clearly cannot
afford an elaborate precomputation of the modified probability weights.

To cope with this, the non-uniform sampling strategy employed by \basic\
consists of two steps: 
\begin{itemize}
\item A continuous sampling step, where we select a point in the whole
$d$-dimensional unit cube (or in a ``continuous'' neighborhood of $x^c$)  
with respect to a non-uniform (continuous) probability measure $\pi^d$, which
is more concentrated in points with larger coordinates.
\item A rounding step, where we round the selected point to the grid  $\bar{\Gamma}(X)$.
\end{itemize}
In this way we address both the influence of the dimension and the topological
characteristics of the point set $X$. This works without performing any precomputation of
probability weights on $\bar{\Gamma}(X)$ -- instead, the random generator, the change of measure
on $[0,1]^d$ from the $d$-dimensional Lebesgue measure to $\pi^d$, and our rounding procedure
do this implicitly!  
Theoretical and experimental justifications for our non-uniform sampling strategy 
can be found in Section~\ref{sec:analysis} and Section~\ref{SEC:EXPERIMENTS}.

\subsection{Sampling of Neighbors}
\label{subsec:neighbors}
\label{SUBSEC:NEIGHBORS}

In the following, we present how we modify the probability distribution over the neighborhood sets. 
Our non-uniform sampling strategy consists of the following two steps.

\textbf{Continuous Sampling}
Consider a point 
$x\in \bar{\Gamma}(X)$. For fixed $mc\in [d]$ let $j_1,\ldots , j_{mc} \in [d]$ be pairwise different coordinates. For $j \in \{j_1,\ldots , j_{mc}\}$ 
let $\varphi_j: [ |\bar{\Gamma}_j(X)\cup \{0\}| ]\to \bar{\Gamma}_j(X)\cup \{0\}$ be the ordering 
of the 
set $\bar{\Gamma}_j(X)\cup \{0\}$ (in particular $\varphi_j(1) = 0$).
Let us now consider the real interval $C_{k,j}(x) := [\xi(x_j), \eta(x_j)]$
with
\begin{align*}
 \xi(x_j) := \varphi \big( \max\{1, \varphi^{-1}(x_j) -\ell\} \big)
\text{ and }
 \eta(x_j) := \varphi \big( \min\{|\bar{\Gamma}_j(X)\cup \{0\}|, 
\varphi^{-1}(x_j) +\ell\} \big) \,.
\end{align*}
Our new \emph{neighborhood $C^{j_1,\ldots , j_{mc}}_k(x)$ of $x$ of order $k$} is the Cartesian product
\begin{equation}
\label{neigGK}
C^{j_1,\ldots , j_{mc}}_k(x) := \hat{C}_{k,1}(x) \times \ldots \times \hat{C}_{k,d}(x)\,,
\end{equation}
where $\hat{C}_{k,j}(x) = C_{k,j}(x)$ for $j\in \{j_1,\ldots,j_{mc}\}$ and
$\hat{C}_{k,j}(x) =\{x_j\}$ otherwise. 
We abbreviate $C^{mc}_{k}(x):=C^{j_1,\ldots , j_{mc}}_{k}(x)$ if $j_1,\ldots , j_{mc}$ are 
$mc$ coordinates chosen uniformly at random.

Instead of endowing $C^{j_1,\ldots , j_{mc}}_{k}(x)$ with the Lebesgue measure on the non-trivial components, we choose a different probability distribution which we describe in the following. 
First, let us consider the polynomial product measure
\begin{equation*}
\pi^d(\,dx) = \otimes^d_{j=1}  f(x_j)\,\lambda(\,dx_j) 
\text{ with density function }
f: [0,1] \to \R\,, \,r \mapsto d r^{d-1}
\end{equation*} on $[0,1]^d$; here $\lambda = \lambda^1$ should denote the one-dimensional 
Lebesgue measure.
Notice that in dimension $d=1$ we have $\pi^1 = \lambda$.
Picking a random point $y\in [0,1]^d$ with respect to the new 
probability measure
$\pi^d$ can easily be done in practice by sampling a point
$z\in [0,1]^d$ with respect to $\lambda^d$ and then putting 
$y:= (z_1^{1/d},\ldots,z^{1/d}_d)$.

We endow $C^{j_1,\ldots , j_{mc}}_{k}(x)$ with the 
probability distribution induced by the polynomial product measure  
on the $mc$ non-trivial components $C_{k,j_1}(x),\ldots, C_{k,j_{mc}}(x)$. 
To be more 
explicit, we map each 
$C_{k,j}(x)$, $j\in \{j_1,\ldots, j_{mc}\}$,
to the unit interval $[0,1]$ by
\begin{equation*}
\Psi_j : C_{k,j}(x) \to [0,1], 
r \mapsto \frac{r^d - (\xi(x_j))^d}{(\eta(x_j))^d - (\xi(x_j))^d}\,.
\end{equation*}
Recall that $\xi(x_j) := \min C_{k,j}(x)$ and $\eta(x_j) := \max C_{k,j}(x)$. The inverse mapping $\Psi^{-1}_j$ is then given by
\begin{equation*}
\Psi^{-1}_j : [0,1] \to C_{k,j}\,, \, 
s\mapsto \Big( \big((\eta(x_j))^d - (\xi(x_j))^d \big) s + (\xi(x_j))^d \Big)^{1/d}\,.
\end{equation*}
If we want to sample a random point $y\in C^{j_1,\ldots , j_{mc}}_{k}(x)$, we randomly choose 
scalars $s_1,\ldots, s_{mc}$ in $[0,1]$ with respect to $\lambda$ and put
$y_{j_i} := \Psi^{-1}_{j_i}(s_i)$ for 
$i = 1,\ldots, mc$. For indices $j \notin \{j_1,\ldots, j_{mc}\}$
we set $y_j := x_j$. 

\textbf{Rounding Procedure:} We round the point $y$
once up and once down to the nearest points $y^+$ and $y^-$ in 
$\bar{\Gamma}(X)$. 
More precisely, for all $j \in [d]$, let
$y^+_j:=\min\{x^i_j \in \bar{\Gamma}_j(X) \setmid y_j \leq x^i_j \}$. 
If $y_j \geq \min{\bar{\Gamma}_j(X)}$ we set
$y^-_j:=\max\{x^i_j \in \bar{\Gamma}_j(X) \setmid y_j \geq x^i_j \}$ and in case $y_j < \min{\bar{\Gamma}_j}(X)$, we set $y^-_j:= \max{\Gamma_j}(X)$.

Obviously, 
$A(y^+,X) =  A(y,X)$ and thus,
$\delta(y^+)= V_{y^+} - A(y^+,X) \geq V_{y} - A(y,X) = \delta(y)\,.$
Similarly, if $y_j \geq \min \Gamma_j(X)$ for all $j \in [d]$, we have $\bar{A}(y^-,X) = \bar{A}(y,X)$. 
Hence,
$\bar{\delta}(y^-)= \bar{A}(y^-,X)- V_{y^-} \geq \bar{A}(y,X)- V_{y} = \bar{\delta}(y)\,.$
If $y_j < \min \Gamma_j(X)$ for at least one $j \in [d]$
we have $\bar{\delta}(y) \le 0$ since $\bar{A}(y,X)=0$. 
But we also have $A(y,X)=0$ and thus
$\delta^*(y) = \delta (y) \leq \delta(y^+)\,.$
Putting everything together, we have shown that 
$\max \{ \delta(y^+), \bar{\delta}(y^-) \} \geq \delta^*(y)\,.$ 

Since it is only of insignificant additional computational cost to also compute 
$\bar{\delta}(y^{-,-})$ where $y^{-,-}_j:=y^-_j$ for all $j \in [d]$ with $y_j \geq \min{\bar{\Gamma}_j(X)}$ and $y^{-,-}_j:=\min{\bar{\Gamma}_j(X)}$ for $j$ with $y_j < \min{\bar{\Gamma}_j(X)}$, we also do that in case at least one such $j$ with $y_j < \min{\bar{\Gamma}_j(X)}$ exists.

For sampling a neighbor $x^{nb}$ of $x^c$ the algorithm thus does the following. 
First, it samples $mc$ coordinates $j_1,\ldots, j_{mc} \in [d]$ uniformly at random. 
Then it samples a point $y \in C^{j_1,\ldots , j_{mc}}_{k}(x^c)$ as described above, computes the rounded grid points $y^+$, $y^-$, and $y^{-,-}$ and computes the discrepancy $\delta_{\Gamma}^*(y):=\max\{\delta(y^+), \bar{\delta}(y^-), \bar{\delta}(y^{-,-})\}$ of the rounded grid points. The subscript $\Gamma$ shall indicate that we consider the rounded grid points.
As in the algorithm of Winker and Fang, \basic\ updates $x^c$ if and only if 
$\Delta \delta^* = \delta_{\Gamma}^*(y) - \delta^*(x^c) \geq T$, where $T$ denotes the current threshold. In this case we always update $x^c$ with the best rounded test point, i.e., we update $x^c:=y^+$ if $\delta_{\Gamma}^*(y) = \delta(y^+)$, $x^c:=y^-$ if $\delta_{\Gamma}^*(y) = \bar{\delta}(y^-)$, and $x^c:=y^{-,-}$ otherwise. 

\subsection{Sampling of the Starting Point}
\label{subsec:starting}
Similar to the probability distribution on the neighborhood sets, we 
sample the starting point $x^c$ as follows. 
First, we sample a point $x$ from $[0,1]^d$ according to $\pi^d$. We then round $x$ up and down to $x^+$, $x^-$, and $x^{-,-}$, respectively and again we set $x^c:=x^+$ if $\delta_{\Gamma}^*(x) = \delta(x^+)$, $x^c:=x^-$ if $\delta_{\Gamma}^*(x) = \bar{\delta}(x^-)$, and we set $x^c:=x^{-,-}$ otherwise. 

\subsection{Computation of Threshold Sequence}
\label{subsec:threshold}
The modified neighborhood sampling is also used for computing the sequence of threshold values. 
If we want the algorithm to perform $I$ iterations, we compute the threshold sequence as follows. 
For each $t \in [\sqrt{I}]$ we sample a pair
$(y^t, \tilde{y}^t)$, where $y^t \in \bar{\Gamma}(X)$ is sampled as is the starting point 
and $\tilde{y}^t \in \bar{\Gamma}(X)$ is a neighbor of $y^t$, sampled according to the procedure described in Section~\ref{subsec:neighbors}.
The thresholds $-|\delta^*(y^t) - \delta^*(\tilde{y}^t)|$ are sorted in increasing order and each threshold will be used for $\sqrt{I}$ iterations of \basic. 
Note that by this choice, we are implicitly setting $\alpha:=1$ in the notion of the algorithm of Winker and Fang.

\section{Further Improvements -- Algorithm \improved}
\label{sec:improved}
In the following, we present further modifications which we applied to the basic algorithm \basic. We call the new, enhanced algorithm \improved.
 
The main improvements, which we describe in more detail below, are 
\textbf{(i)} a further reduction of the search space by introducing new rounding procedures (``snapping''),
\textbf{(ii)} shrinking neighborhoods and growing number of search directions,
and
\textbf{(iii)} separate optimization of $\delta$ and $\bar{\delta}\,.$

\subsection{Further Reduction of the Search Space}
\label{subsec:snapping}
\label{SUBSEC:SNAPPING}

We mentioned that for calculating the star discrepancy it is sufficient to test 
just the points $y\in \bar{\Gamma}(X)$ and to calculate $\delta^*(y)$, cf. equation~(\ref{disfor}). 
Therefore $\bar{\Gamma}(X)$ has been
the search space we have considered so far. 
But it is possible to reduce the 
cardinality of the search space even further.
 
We obtain the reduction of the search space via a rounding procedure which we call \emph{snapping}. As this is an important element in the modified algorithm, we now discuss the underlying concept of critical
points (or test boxes).
For $y\in [0,1]^d$ we define
\begin{equation*}
S_j(y) := \prod^{j-1}_{i=1}[0,y_i) \times \{y_j\}
\times \prod^d_{k= j+1}[0,y_k)\,,
\hspace{1ex}j=1, \ldots, d\,.
\end{equation*}
We say that $S_j(y)$ is a \emph{$\delta(X)$-critical surface} if $S_j(y)\cap
\{x^1,\ldots,x^n\} \neq \emptyset$ or $y_j = 1$.
We call $y$ a \emph{$\delta(X)$-critical point} if for all $j\in [d]$ the surfaces 
$S_j(y)$ are critical. Let $\Ap$ denote the set of $\delta(X)$-critical 
points in $[0,1]^d$. 

Let $\bar{S}_j(y)$ be the closure of $S_j(y)$, i.e., 
\begin{equation*}
\bar{S}_j(y) := \prod^{j-1}_{i=1}[0,y_i] \times \{y_j\}
\times \prod^d_{k= j+1}[0,y_k]\,,
\hspace{1ex}j=1, \ldots, d\,.
\end{equation*}
We say $\bar{S}_j(y)$ is a \emph{$\bar{\delta}(X)$-critical surface} if 
$\bar{S}_j(y)\cap
\{x_1,\ldots,x_n\} \neq \emptyset$.
If for all $j\in [d]$ the surfaces $\bar{S}_j(y)$ are 
$\bar{\delta}(X)$-critical, then we
call $y$ a \emph{$\bar{\delta}(X)$-critical point}. 
Let $\bar{\Ap}$ denote the set of $\bar{\delta}(X)$-critical 
points in $[0,1]^d$. 
We call $y$ a \emph{$\delta^*(X)$-critical point} if 
$y\in \Ap^* := \Ap \cup \bar{\Ap}$. 

For $j\in [d]$ let $\nu_j := |\bar{\Gamma}_j(X)|$, and let again 
$\phi_j: [\nu_j] \to \bar{\Gamma}_j(X)$ denote the ordering of $\bar{\Gamma}_j(X)$. 
Let $\Phi: [\nu_1]\times \cdots \times [\nu_d] \to \bar{\Gamma}(X)$ be the mapping
with components $\phi_j$, $j=1,\ldots,d$. We say that a multi-index
$(i_1,\ldots,i_d)\in [n+1]^d$ is a \emph{$\delta(X)$-critical multi-index} 
if $\Phi(i_1,\ldots, i_d)$ is a $\delta(X)$-critical point. We use 
similar definitions in cases where we deal with $\bar{\delta}(X)$
or $\delta^*(X)$.

\begin{lemma}
\label{Kritisch}
Let $X=\{x^1,\ldots,x^n\}$ be a $n$-point configuration in $[0,1)^d$.
Let $\Ap = \Ap(X)$, $\bar{\Ap} = \bar{\Ap}(X)$, and 
$\Ap^* = \Ap^*(X)$ be as defined above.
Then $\Ap$, $\bar{\Ap}$ and $\Ap^*$ are non-empty subsets of $\bar{\Gamma}(X)$.
Furthermore, 
\begin{equation*}
\sup_{y\in [0,1]^d} \delta(y) = \max_{y\in\Ap} \delta(y)\,,
\sup_{y\in [0,1]^d} \bar{\delta}(y) 
= \max_{y\in\bar{\Ap}} \bar{\delta}(y)\\
\hspace{1ex}\text{and}\hspace{1ex}
\sup_{y\in[0,1]^d}\delta^*(y) = \max_{y\in \Ap^*} \delta^*(y).
\end{equation*}
\end{lemma}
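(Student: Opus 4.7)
My plan is to prove the three sup-max identities by constructing, for an arbitrary $y\in[0,1]^d$, a critical point $y^*$ that dominates $y$ in the relevant objective. First I would dispose of the structural claims: the inclusions $\Ap,\bar{\Ap}\subseteq\bar{\Gamma}(X)$ follow immediately from the definitions, since each coordinate $y_j$ of a critical point is either $1$ or equals $x^i_j$ for some $x^i\in X$, and in either case lies in $\bar{\Gamma}_j(X)$. For non-emptiness, $(1,\ldots,1)\in\Ap$ (each $S_j$ is critical via $y_j=1$), and the point $y^\dagger:=(\max_i x^i_1,\ldots,\max_i x^i_d)$ lies in $\bar{\Ap}$, since any $i$ achieving the $j$-th coordinate maximum gives a witness $x^i\in\bar{S}_j(y^\dagger)$.

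For $\sup_y\delta(y)=\max_{y\in\Ap}\delta(y)$ I would use a coordinate-by-coordinate ``push-up'' procedure. Starting from an arbitrary $y\in[0,1]^d$, I process $j=1,\ldots,d$ in order: if $S_j$ of the current $y$ is already critical I leave $y_j$ unchanged; otherwise I replace $y_j$ by the smallest value $y_j'>y_j$ for which some $x^i$ has $x^i_j=y_j'$ and $x^i_k<y_k$ for all $k\neq j$, or by $1$ if no such value exists. Three observations justify this: (i) throughout the update $A(\cdot,X)$ stays constant, because by minimality of $y_j'$ and non-criticality of $S_j$ no point $x^i$ with $x^i_k<y_k$ for $k\neq j$ has $y_j<x^i_j<y_j'$, while $V_y$ weakly increases, so $\delta$ does not decrease. (ii) $S_j$ at the new $y$ is critical by construction. (iii) Any criticality of $S_k$ $(k<j)$ established in an earlier iteration is preserved, because its witness $x^i$ satisfies $x^i_j<y_j^{\mathrm{old}}\leq y_j^{\mathrm{new}}$ and hence still fulfills $x^i_l<y_l^{\mathrm{new}}$ for all $l\neq k$. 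After $d$ steps I obtain $y^*\in\Ap$ with $\delta(y^*)\geq\delta(y)$, whence $\sup\delta\leq\max_{\Ap}\delta$; the reverse inequality is trivial.

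The identity for $\bar{\delta}$ is proved by the dual ``push-down''. Given $y$ with $\bar{A}(y,X)\geq 1$, whenever $\bar{S}_j$ is not critical I replace $y_j$ by $y_j':=\max\{x^i_j\setmid x^i\leq y\}$. The set $\{i\setmid x^i\leq y\}$ is unchanged by this move, so $\bar{A}$ is preserved while $V_y$ strictly decreases; $\bar{S}_j$ becomes critical via the argmax witness; and any previously established critical $\bar{S}_k$ retains its witness, since $x^i\leq y$ forces $x^i_j\leq y_j'$. The edge case $\bar{A}(y,X)=0$ is handled separately: here $\bar{\delta}(y)=-V_y\leq 0$, and applying the push-down procedure to the point $(1,\ldots,1)$ (for which $\bar{A}=n$ and $\bar{\delta}=0$) yields some $y^\dagger\in\bar{\Ap}$ with $\bar{\delta}(y^\dagger)\geq 0\geq\bar{\delta}(y)$.

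Finally, the identity for $\delta^*$ follows formally from the first two via $\delta^*=\max\{\delta,\bar{\delta}\}$:
\[
\sup_y\delta^*(y)=\max\bigl\{\sup_y\delta(y),\,\sup_y\bar{\delta}(y)\bigr\}=\max\bigl\{\max_{\Ap}\delta,\,\max_{\bar{\Ap}}\bar{\delta}\bigr\}\leq\max_{\Ap^*}\delta^*,
\]
and the reverse bound is immediate. The main obstacle, I expect, is the bookkeeping in the iterative rounding steps, specifically verifying that each push preserves the criticalities established in earlier iterations; this hinges on the asymmetric strict/non-strict inequalities in the definitions of $S_j$ and $\bar{S}_j$, and on checking that edge cases such as $y_k=0$ or $\bar{A}(y,X)=0$ do not spoil the argument.
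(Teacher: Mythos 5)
Your proof is correct and follows essentially the same strategy as the paper: for a non-critical point, push coordinates up (resp.\ down) to grid values that make the surfaces critical without changing the point count, so that $\delta$ (resp.\ $\bar\delta$) does not decrease. The paper only carries this out in detail for $\delta$ and declares the other cases ``similar simple arguments''; your write-up supplies exactly those omitted details (the $\bar\delta$ push-down, the $\bar A=0$ edge case, and the formal reduction for $\delta^*$) and is sound.
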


\begin{proof}
The set $\Ap$ is not empty, since it contains the point $(1,\ldots,1)$.
Let $y \in \Ap$. 
By definition, we find for all $j\in [d]$ an index $\sigma(j)
\in [n]$ with $y_j = x^{\sigma(j)}_j$ or we have $y_j = 1$. Therefore
$y\in\bar{\Gamma}(X)$.
Let $z\in [0,1]^d\setminus \Ap$. Since $\delta(z) = 0$ if $z_j = 0$
for any index $j$, we may assume $z_j > 0$ for all $j$. 
As $z \notin \Ap$ there exists a $j\in [d]$ 
where $S_j(z)$ is not
$\delta(X)$-critical. In particular, we have $z_j < 1$. 
Let now $\tau\in \bar{\Gamma}_j(X)$ be the smallest value with
$z_j < \tau$. Then the point 
$\hat{z} := (z_1,\ldots, z_{j-1}, \tau, z_{j+1}, \ldots, z_d)$
fulfills $V_{\hat{z}} > V_z$. 
Furthermore, the sets $[0, \hat{z}) \setminus
[0, z)$ and $X$ are disjoint. So $[0, \hat{z})$ and 
$[0,z)$ contain the same points of $X$. 
In particular we have $A(\hat{z},X)=A(z,X)$ and thus,
$\delta(\hat{z}) > \delta(z)$. 
This argument verifies $\sup_{y\in [0,1]^d} \delta(y) = \max_{y\in
  \Ap} \delta(y)$. The remaining statements of Lemma 
\ref{Kritisch} can be proven with similar simple arguments.
\end{proof}

We now describe how to use this concept in our algorithm. Let us first describe how we sample a neighbor $x^{nb}$ of a given point $x^c$. The procedure starts exactly as described in Section~\ref{subsec:neighbors}. That is, we first sample $mc$ coordinates $j_1,\ldots, j_{mc} \in [d]$ uniformly at random.
Next, we sample $y\in C^{j_1,\ldots,j_{mc}}_k(x^c)$ according to the probability distribution induced by the polynomial product measure $\pi^d$ on the non-trivial components of $C^{j_1,\ldots,j_{mc}}_k(x^c)$, cf. Section~\ref{SUBSEC:NEIGHBORS}. Again we round $y$ up and down to the nearest grid points $y^+$, $y^-$ and $y^{-,-}$, respectively.
We then apply the following snapping procedures\footnote{The snapping procedure is the same for $y^-$ and $y^{-,-}$. Therefore, we describe it for $y^-$ only.}. 

\textbf{Snapping down.}
We aim at finding a $\bar{\delta}(X)$-critical point $y^{-,sn} \leq y^-$ such that the closed 
box $[0,y^{-,sn}_j]_{j=1}^d$ contains exactly the same points of $X$ as the box $[0,y^{-}_j]_{j=1}^d$. 
We achieve this by simply setting for all $j \in [d]$ 
$$
y^{-,sn}_j := \max \{ x^i_j \setmid i \in [n], x^i \in [0,y^{-}] \}\,.
$$
From the algorithmic perspective, we initialize $y^{-,sn}:=(0,\ldots,0)$ and check for each index $i \in [n]$ whether $x^i \in [0,y^{-}]$. If so, we check for all $j \in [d]$ whether $x^i_j \leq y^{-,sn}_j$ and update $y^{-,sn}_j:=x^i_j$ otherwise.

\textbf{Snapping up\footnote{Being aware that ``snapping up'' is an oxymoron, we still use this notation as it eases readability in what follows.}.} 
Whereas snapping down was an easy task to do, the same is not true for \emph{snapping up}, i.e., rounding a point to a $\delta(X)$-critical one. More precisely, given a point $y^+$, there are multiple $\delta(X)$-critical points $y^{+,sn} \geq y^+$ such that the open box created by $y^{+,sn}$ contains only those points which are also contained in $[0,y^+)$. 

Given that we want to perform only one snapping up procedure per iteration, we use the following random version of snapping upwards.
In the beginning, we initialize $y^{+,sn}:=(1,\ldots,1)$. Furthermore, we pick a permutation 
$\sigma$ of $[d]$ uniformly at random from the set $S_d$ of all permutations of set $[d]$.
For each point $x \in \{x^i \setmid i \in [n] \}$ we now do the following. 
If $x \in [0,y^{+})$ or $x_j\geq y_j^{+,sn}$ for at least one $j \in [d]$, we do nothing. Otherwise we update $y^{+,sn}_{\sigma(j)}:=x_{\sigma(j)}$ for the smallest $j\in [d]$ with $x_{\sigma(j)} \geq y^+_{\sigma(j)}$. After this update, $x$ is no longer inside the open box generated by $y^{+,sn}$.

Note that snapping up is subject to randomness as the $\delta(X)$-critical point obtained by our snapping procedure can be different for different permutations $\sigma \in S_d$.

The complexity of both snapping procedures is of order $O(nd)$. In our experiments, the snapping procedures caused a delay in the (wall clock) running time by a factor of approximately two, if compared to the running time of \basic. It is not difficult to verify the following.

\begin{lemma}
Let $X$ be a given $n$-point sequence in $[0,1)^d$
For all $y \in [0,1]^d$, the point $y^{+,sn}$, computed as described above, is $\delta(X)$-critical and both $y^{-,sn}$ and $y^{-,-,sn}$ are $\bar{\delta}(X)$-critical. 
\end{lemma}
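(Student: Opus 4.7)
The plan is to handle the three claims separately. The two snap-down statements admit the same short argument, while the snap-up statement requires a more delicate analysis of the permutation-driven update order.

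For $y^{-,sn}$ (and, analogously, $y^{-,-,sn}$), fix $j \in [d]$. By construction $y^{-,sn}_j = \max\{x^i_j : x^i \in [0, y^-]\}$; let $x^{i_j}$ be a point attaining this maximum. Since $x^{i_j} \in [0, y^-]$, the coordinatewise maximum yields $x^{i_j}_k \leq y^{-,sn}_k$ for every $k$, and combined with $x^{i_j}_j = y^{-,sn}_j$ this places $x^{i_j}$ in $\bar{S}_j(y^{-,sn})$; hence $\bar{S}_j(y^{-,sn})$ is $\bar{\delta}(X)$-critical. The case of $y^{-,-,sn}$ differs only on coordinates where no $x^i$ enters the max, so the same witness argument applies verbatim on the contributing coordinates.

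For $y^{+,sn}$, fix $k \in [d]$. If $y^{+,sn}_k = 1$ at the end of the procedure, $S_k(y^{+,sn})$ is critical by definition, so assume $y^{+,sn}_k < 1$ and let $x^* \in X$ be the point whose processing triggered the last update on coordinate $k$. By the update rule $x^*_k = y^{+,sn}_k$ at termination, so it remains to verify $x^*_{k'} < y^{+,sn}_{k'}$ for every $k' \neq k$. Writing $j^* = \sigma^{-1}(k)$ and $j' = \sigma^{-1}(k')$, the easier sub-case $j' < j^*$ follows directly: the minimality of $j^*$ among indices $j$ with $x^*_{\sigma(j)} \geq y^+_{\sigma(j)}$ forces $x^*_{k'} < y^+_{k'}$, and since every value ever written to $y^{+,sn}_{k'}$ is of the form $x^{**}_{k'}$ with $x^{**}_{k'} \geq y^+_{k'}$, we conclude $x^*_{k'} < y^+_{k'} \leq y^{+,sn}_{k'}$.

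The main obstacle is the remaining sub-case $j' > j^*$, in which the update precondition for $x^*$ places no direct constraint on $x^*_{k'}$. The strategy is to trace, in processing order, the chain of updates to coordinate $k'$ performed after $x^*$: each triggering point $x^{**}$ satisfies $x^{**}_{k''} < y^{+,sn}_{k''}$ for all $k''$ at the time of its update, so in particular $x^{**}_k < x^*_k$ (since $y^{+,sn}_k$ equals $x^*_k$ from then on), while $\sigma$-minimality for $x^{**}$ also forces $x^{**}_{k''} < y^+_{k''}$ whenever $\sigma^{-1}(k'') < j'$. Iterating these local constraints along the full chain of updates to $k'$ should yield the strict inequality $x^*_{k'} < y^{+,sn}_{k'}$ at termination; turning this chain-of-updates argument into a rigorous induction is the technical heart of the proof.
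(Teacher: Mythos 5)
Your argument for the two snap-down claims is correct and complete: the coordinatewise maximum guarantees that a witness $x^{i_j}$ attaining $y^{-,sn}_j$ satisfies $x^{i_j}\le y^{-,sn}$, hence lies on $\bar{S}_j(y^{-,sn})$. (The paper itself gives no proof of this lemma, only the remark that it is ``not difficult to verify'', so there is no argument of the authors to compare against; note also the minor degenerate case where $[0,y^-]$ contains no point of $X$ at all, in which the max is over the empty set and $y^{-,sn}=(0,\dots,0)$ need not be critical.)

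The snap-up part, however, is not a proof, and the sub-case you correctly single out as the technical heart ($j'>j^*$) cannot be closed: the inequality you are trying to derive is false for the procedure as described. Take $d=2$, $\sigma$ the identity, $X=\{x^1,x^2\}$ with $x^1=(0.7,0.8)$, $x^2=(0.4,0.6)$, and $y=(0.5,0.5)$, so that $y^+=(0.7,0.6)$ and $[0,y^+)$ contains no point of $X$. Processing $x^1$ first: it is not in $[0,y^+)$ and lies strictly below $y^{+,sn}=(1,1)$, and the smallest $j$ with $x^1_j\ge y^+_j$ is $j=1$, so $y^{+,sn}$ becomes $(0.7,1)$. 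Processing $x^2$: it is not in $[0,y^+)$, lies strictly below $(0.7,1)$, and the smallest $j$ with $x^2_j\ge y^+_j$ is $j=2$, so $y^{+,sn}$ becomes $(0.7,0.6)$. Here $k=1$, $x^*=x^1$, $k'=2$, and every local constraint in your chain holds ($x^2_1=0.4<x^1_1$ and $x^2_1<y^+_1$), yet the target conclusion $x^*_{k'}<y^{+,sn}_{k'}$ reads $0.8<0.6$. Indeed $S_1\bigl((0.7,0.6)\bigr)=\{0.7\}\times[0,0.6)$ contains no point of $X$ and $y^{+,sn}_1\ne 1$, so the output is not $\delta(X)$-critical: the point that ``claimed'' coordinate $1$ loses its witness role when a later update shrinks coordinate $2$ past it. No chain-of-updates induction can rescue this step. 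What the single-pass procedure does guarantee is only $y^{+,sn}\ge y^+$ and $A(y^{+,sn},X)=A(y^+,X)$, hence $\delta(y^{+,sn})\ge\delta(y^+)$; full criticality would require either a modified procedure (e.g.\ revalidating or re-raising coordinates whose claiming point has been invalidated) or a weakening of the lemma.
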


In the run of the algorithm we now do the following. Given that we start in some grid point $x^c$, we sample $y\in C^{mc}_k(x^c)$ and we round $y$ to the closest grid points $y^+, y^-, y^{-,-} \in \bar{\Gamma}(X)$ as described in Section~\ref{subsec:neighbors}. Next we compute the $\delta(X)$-critical point $y^{+,sn}$, the $\bar{\delta}(X)$-critical point $y^{-,sn}$, and, if $y^-\neq y^{-,-}$, we also compute the $\bar{\delta}(X)$-critical point $y^{-,-,sn}$.
We decide to update $x^c$ if $\Delta \delta^* = \delta^{*,sn}(y) - \delta^{*,sn}(x^c) \geq T$, where $T$ is the current threshold, $\delta^{*,sn}(y):=\max\{\delta(y^{+,sn}), \bar{\delta}(y^{-,sn}), \bar{\delta}(y^{-,-,sn})\}$, and $\delta^{*,sn}(x^c)$ is the value as was computed in the iteration where $x^c$ was updated last.
Note that we do not update $x^c$ with any of the critical points $y^{+,sn}$, $y^{-,sn}$, or $y^{-,-,sn}$ but only replace $x^c$ with the simple rounded grid points $y^{+}$, $y^{-}$, or $y^{-,-}$, respectively. More precisely, we update $x^c:=y^+$ if $\delta^{*,sn}(y) = \delta(y^{+,sn})$, $x^c:=y^-$ if $\delta^{*,sn}(y) = \bar{\delta}(y^{-,sn})$, and $x^c:=y^{-,-}$, otherwise. 

\subsubsection{Computation of the Starting Point and the Threshold Sequence}  
When computing the starting point $x^c$ we first sample a random point $x$ from $[0,1]^d$ according to $\pi^d$ (see Section~\ref{subsec:neighbors}) and compute $x^+$ and $x^-$, and, if applicable, $x^{-,-}$. We also compute the $\delta(X)$- and $\bar{\delta}(X)$-critical points $x^{+,sn}$, $x^{-,sn}$, and $x^{-,-,sn}$ and set $\delta^{*,sn}(x):=\max\{\delta(x^{+,sn}), \bar{\delta}(x^{-,sn}), \bar{\delta}(x^{-,-,sn})\}$. We put $x^c:=x^+$ if $\delta^{*,sn}(x) = \delta(x^{+,sn})$, $x^c:=x^-$ if $\delta^{*,sn}(x) = \bar{\delta}(x^{-,sn})$, and $x^c:=x^{-,-}$, otherwise.

For computing the threshold sequence, we also use the $\delta(X)$- and $\bar{\delta}(X)$-critical $\delta^{*,sn}$-values. That is, for $t=1,\ldots,\sqrt{I}$ we compute $t$-th pair $(y^t, \tilde{y}^t)$ by first sampling a random starting point $y^t$ as described above (i.e., $y^t \in \{x^+, x^-, x^{-,-}\}$ for some $x$ sampled from $[0,1]^d$ according to $\pi^d$
and $y^t=x^+$ if $\delta^{*,sn}(x) = \delta(x^{+,sn})$, $y^t=x^-$ if $\delta^{*,sn}(x) = \bar{\delta}(x^{-,sn})$, and $y^t=x^{-,-}$ otherwise). 
We then compute a neighbor $\tilde{y}^t \in C^{mc}_k(y^t)$ and the maximum of the discrepancy of the $\delta(X)$- and $\bar{\delta}(X)$-critical points $\delta^{*,sn}(\tilde{y}^t):=\max\{\delta(\tilde{y}^{t,+,sn}), \bar{\delta}(\tilde{y}^{t,-,sn}), \bar{\delta}(\tilde{y}^{t,-,-,sn})\}$. 
Finally, we sort the threshold values $T(t):=- |\delta^{*,sn}(y^t)-\delta^{*,sn}(\tilde{y}^t)|, t=1,\ldots,\sqrt{I}$ in increasing order. This will be our threshold sequence.

\subsection{Shrinking Neighborhoods and Growing Number of Search Directions}
\label{subsec:shrinking}

We add the concept of shrinking neighborhoods, i.e., we consider neighborhoods that decrease in size during the run of the algorithm.
The intuition here is the following. In the beginning, we want the algorithm to make large jumps. This allows it to explore different regions of the search space. 
However, towards the end of the algorithm we want it to become more local, allowing it to explore large parts of the local neighborhood. We implement this idea by iteratively shrinking the $k$-value. At the same time, we increase the $mc$-value, letting the algorithm explore the local neighborhood more thoroughly. 

More precisely, we do the following. 
In the beginning we set $\ell:=(n-1)/2$ and $mc:=2$. That is, the algorithm is only allowed to change few coordinates of the current search point but at the same time it can make large jumps in these directions. Recall that $k=2\ell+1$.
In the $t$-th iteration (out of a total number of $I$ iterations) we then update 
\begin{equation*}
\ell:=\frac{n-1}{2} \cdot \frac{I-t}{I} + \frac{t}{I}\, \text{  and  }\, mc:=2+\frac{t}{I} (d-2)\,.
\end{equation*}

For the computation of the threshold sequence, we equivalently scale $k$ and $mc$ by initializing $\ell:=(n-1)/2$ and $mc:=2$ and then setting for the computation of the $t$-th pair $(y^t, \tilde{y}^t)$
\begin{equation*}
\ell:=\frac{n-1}{2} \cdot \frac{\sqrt{I}-t}{\sqrt{I}} + \frac{t}{\sqrt{I}}\, \text{  and  }\, mc:=2+\frac{t}{\sqrt{I}} (d-2)\,.
\end{equation*}
Recall that we compute a total number of $\sqrt{I}$ threshold values.

\subsection{Seperate Optimization of $\delta$ and $\bar{\delta}$}
\label{subsec:split}

Our last modification is based on the intuition that the star discrepancy is either obtained by an open, subproportionally filled box (i.e., there exists a $y \in \bar{\Gamma}(X)$ such that $d^*_\infty(X) = \delta(y)$), in which case one might assume that there are many points $\tilde{y}$ with large $\delta(\tilde{y})$-values. 
Alternatively, if the discrepancy is obtained by a closed, overproportionally filled box (i.e., there exists a $y \in \bar{\Gamma}(X)$ such that $d^*_\infty(X) = \bar{\delta}(y)$), we assume that there are multiple such points $\tilde{y}$ with large $\bar{\delta}(\tilde{y})$-values. 
This intuition triggered us to test also the following split variant of the algorithm. 

In the $\delta$-version of the algorithm, we only consider open test boxes. That is, whenever we want to sample a random starting point [a random neighbor], we proceed exactly as described in Section~\ref{subsec:snapping} but instead of computing both $y^+$ and $y^-$ (and, potentially $y^{-,-}$) as well as the $\delta(X)$- and $\bar{\delta}(X)$-critical points $y^{+,sn}$, $y^{-,sn}$, and $y^{-,-,sn}$ in the notation of Section~\ref{subsec:snapping}, we only compute $y^+$ [and $y^{+,sn}$], and we initialize $x^c:=y^+$ [we update $x^c:=y^+$ if and only if $\Delta \delta = \delta(y^{+,sn}) - \delta\big((x^c)^{+,sn}\big) \geq T$, where $T$ again denotes the current threshold].

The $\bar{\delta}$-version is symmetric. We compute both $y^-$ and $y^{-,-}$ as well as the $\bar{\delta}(X)$-critical points $y^{-,sn}$ and $y^{-,-,sn}$, and we initialize $x^c:=y^-$ or $x^c:=y^{-,-}$ [we update $x^c:=y^-$ or $x^c:=y^{-,-}$ if and only if $\Delta \bar{\delta} = \max\{\bar{\delta}(y^{-,sn}), \bar{\delta}(y^{-,-,sn}) \} - \bar{\delta}\big((x^c)^{-,sn}\big) \geq T$].

Note that only $\delta$-values (or $\bar{\delta}$-values, respectively) are considered for the computation of the threshold sequence as well.

The algorithm is now the following. We perform $I$ iterations of the $\delta$-version of the algorithm and $I$ iterations of the $\bar{\delta}$-version. We then output the maximum value obtained in either one of the two versions. 

It should be noted that a large proportion of the computational cost of \improved\ lies in the snapping procedures.
Thus, running $I$ iterations of the $\delta$-version followed by $I$ iterations of the $\bar{\delta}$-version has a comparable running time to running $I$ iterations of an algorithm  of the ``mixed'' form where we snap each point up and down to the $\delta(X)$- and $\bar{\delta}(X)$-critical grid points.
Furthermore, as most modern CPUs are multicore and able to run several programs in parallel, the actual wall-clock cost of switching from \basic\ to the split version of \improved\ may be smaller still.

Algorithm~\ref{alg:improved} summarizes \improved. Note that $\bar{\delta}(n,d,X,I)$ is the equivalent of Algorithm~\ref{alg:delta} where we replace $\delta$ by $\bar{\delta}$, $x^+$ by $x^-$ etc. The same is true for Subroutine $\thresh(n,d,X,\sqrt{I},\bar{\delta})$ for computing the threshold sequence for the $\bar{\delta}$-version.

\begin{algorithm2e}[h!]
\textbf{Input:}\\
	\quad Problem instance: $n \in \N$, $d \in \N$, sequence $X=(x^i)_{i=1}^n$ in $[0,1)^d$.\\
	\quad Number of iterations $I$.\\
\textbf{Computation of a lower bound for $\disc(X)$:}\\
	\quad $\delta:=\delta(n,d,X,I)$ /* Output of $I$ iterations of the $\delta$-version.*/\\
	\quad $\bar{\delta}:=\bar{\delta}(n,d,X,I)$ /* Output of $I$ iterations of the 	$\bar{\delta}$-version.*/\\
\textbf{Output:} $\delta^*:=\max \{\delta, \bar{\delta} \}$.
\caption{The algorithm \improved\ for computing lower bounds of the star discrepancy $\disc(X)$.}
\label{alg:improved}
\end{algorithm2e}

\begin{algorithm2e}[h!]
\textbf{Initialization:}\\
\Indp $TS=(T(i))_{i=1}^{\sqrt{I}}:=\thresh(n,d,X,\sqrt{I},\delta)$ /*Compute the threshold sequence of length $\sqrt{I}$.*/\\
Sample the starting point: pick $x \in [0,1)^d$ with respect to $\pi^d$ and round $x$ up to the nearest grid point $x^+$. Compute the $\delta(X)$-critical point $x^{+,sn}$ and $\delta(x^{+,sn})$.\\
Initialize $x^c:=x^+$, $\globalbest:=\delta(x^{+,sn})$, $\current:=\delta(x^{+,sn})$,
$\ell:=\lfloor\tfrac{n-1}{2}\rfloor$, and $mc:=2$.\\
\Indm \For{$i=1,\ldots, \sqrt{I}$}
{
Update threshold value $T:=T(i)$.
\For{$t=(i-1)\sqrt{I}+1,\ldots, (i-1)\sqrt{I}+\sqrt{I}$}
{
Update $\ell:=\lfloor\frac{n-1}{2} \cdot \frac{I-t}{I} + \frac{t}{I}\rfloor$ and $mc:=2+\lfloor\frac{t}{I}\rfloor (d-2)$.\\
Sample $y \in C^{mc}_k(x^c)$ as described in Section~\ref{SUBSEC:NEIGHBORS}.\\
Round $y$ up to the nearest grid point $y^+ \in \bar{\Gamma}(X)$ 
and compute the $\delta(X)$-critical point $y^{+,sn}$ as well as $\delta(y^{+,sn})$.\\
\lIf{$\delta(y^{+,sn})>\globalbest$}{update $\globalbest:=\delta(y^{+,sn})$}.\\
\lIf{$\Delta \delta := \delta(y^{+,sn}) - \current \geq T$}{update $x^c:=y^+$ and $\current:=\delta(y^{+,sn})$}.
}}\caption{The $\delta$-version $\delta(n,d,X,I)$.}
\label{alg:delta}
\end{algorithm2e}

\begin{algorithm2e}[h!]
\textbf{Initialization:}\\
\Indp Initialize 
$\ell:=\lfloor\tfrac{n-1}{2}\rfloor$, and $mc:=2$.\\
\Indm 
\For{$t=1,\ldots, \sqrt{I}$}
{
Update 
$\ell:=\lfloor\frac{n-1}{2} \cdot \frac{\sqrt{I}-t}{\sqrt{I}} + \frac{t}{\sqrt{I}}\rfloor$ and 
$mc:=2+\lfloor t/\sqrt{I}\rfloor (d-2)$.\\
Sample a random point: pick $x \in [0,1)^d$ with respect to $\pi^d$ and round $x$ up to the nearest grid point $x^+$. Compute the $\delta(X)$-critical point $x^{+,sn}$ and $\delta(x^{+,sn})$.\\
Sample $y \in C^{mc}_k(x^+)$ as described in Section~\ref{SUBSEC:NEIGHBORS}.\\
Round $y$ up to the nearest grid point $y^+ \in \bar{\Gamma}(X)$ 
and compute the $\delta(X)$-critical point $y^{+,sn}$ as well as $\delta(y^{+,sn})$.\\
$\tilde{T}(i):=-|\delta(y^{+,sn}) - \delta(x^{+,sn})|$.
}Sort thresholds in increasing order to obtain threshold sequence $(T(i))_{i=1}^n$ with $T(i)\leq T(i+1)$ for all $i \in [n-1]$.
\caption{Subroutine $\thresh(n,d,X,\sqrt{I},\delta)$ for computing the threshold sequence.}
\label{alg:thresh}
\end{algorithm2e}

\subsection{Further Variants of the Algorithm} 
\label{sssec:variant}
We do not update $x^c$ with the critical points, since our experiments showed that the performance of the 
algorithm can be significantly improved by updating with the (simply) rounded, not necessarily critical points. This seems to allow the algorithm more flexibility and prevents it from getting stuck in a local optimum too early. 

We also tested a variant of the algorithm where we only update the best-so-far solution with $\delta^{*,sn}(y):=\max\{\delta(y^{+,sn}), \bar{\delta}(y^{-,sn}), \bar{\delta}(y^{-,-,sn})\}$ but where all other decisions are only based on the value $\delta_{\Gamma}^*(y):=\max\{\delta(y^+), \bar{\delta}(y^-), \bar{\delta}(y^{-,-})\}$. 
That is, this algorithm does exactly the same as \basic\ but in addition computes the $\delta(X)$- and $\bar{\delta}(X)$-critical points and stores the largest values of $\delta^{*,sn}$. 
Clearly, the performance (up to random noise) is better than the one of \basic\ at the cost of a higher running-time. 
However, it did not perform as well as the one described above where the decision of whether or not to update a point also depends on the $\delta(X)$- and $\bar{\delta}(X)$-critical $\delta^{*,sn}$-values.

\section{Theoretical Analysis}
\label{sec:analysis}

From our main innovations, namely the non-uniform sampling strategy and the rounding 
procedures ``snapping up'' and ``snapping down'', we already analyzed the snapping
procedures and proved that they enlarge the quality of our estimates. 
The analysis of the non-uniform sampling strategy is much more complicated. One reason 
is that our sampling strategy  strongly interacts with the search heuristic threshold
accepting. That is why we confine ourselves to the analysis of the pure non-uniform
sampling strategy without considering threshold accepting.

In Section \ref{SEC:PROBABILITY}
we prove that sampling in the $d$-dimensional unit cube with respect to the 
probability measure $\pi^d$ instead of $\lambda^d$ leads to superior
discrepancy estimates. (More precisely, we restrict our analysis for technical
reasons to the objective function $\delta$.)
In Section \ref{SUBSEC:RPD} we verify that for $d=1$ sampling with respect to
the probability distribution induced on $\bar{\Gamma}(X)$ by sampling with respect
to $\pi^d$ in $[0,1]^d$ and then rounding to the grid $\bar{\Gamma}(X)$ leads to
better discrepancy estimates than the uniform distribution on $\bar{\Gamma}(X)$.
We comment also on the case $d\ge 2$.
In Section \ref{SUBSEC:NUMBERS} we prove that for random point sets $X$ the probability
of $x\in \bar{\Gamma}(X)$ beeing a critical point is essentially an increasing function of the 
position of its coordinates $x_j$ in the ordered sets $\bar{\Gamma}_j(X)$,
$j=1,\ldots,d$. Recall that critical points yield higher values of the local discrepancy
function $\delta^*$ and include the point that leads to its maximum value. Thus the analysis
in Section \ref{SUBSEC:NUMBERS} serves as another justification of choosing a 
probability measure on the neighborhoods which weights points with larger coordinates
stronger than points with smaller coordinates.

\subsection{Analysis of Random Sampling with Respect to  
$\lambda^d$ and $\pi^d$}
\label{SEC:PROBABILITY}

Here we want to show that sampling in the $d$-dimensional unit cube 
with respect to the non-uniform probability measure $\pi^d$ 
leads to superior results than sampling with respect to the Lebesgue measure
$\lambda^d$.

Before we start with the theoretical analysis, let us give a strong
indication that our non-uniform sampling strategy is much more appropriate
in higher dimension than a uniform sampling strategy.
In~\cite{WF97} Winker and Fang chose in each of the dimensions $d=4,5,6$ in a 
random manner $10$ lattice points sets, cf. also our Section \ref{SEC:EXPERIMENTS}.
They calculated the discrepancy of these sets exactly. If $\eta_d$ 
denotes the average value of the coordinates of the points $y$ with $\delta^*(y) 
= \sup_{z\in [0,1]^d} \delta^*(z)$, we get 
$\eta_4=0.799743$, $\eta_5= 0.840825$, and $\eta_6=0.873523$.
The expected 
coordinate value $\mu_d$ of a point $x$, randomly sampled from $[0,1)^d$
with respect to the measure $\pi^d$, is $\mu_d = d/(d+1)$.
So we get 
$\mu_4 = 0.8$, $\mu_5 =0.8\bar{3}$, and $\mu_6 = 0.857143$.
Note that for using $\lambda^d$ instead of $\pi^d$ the expected coordinate value is
only $0.5$ for all dimensions.

\subsubsection{Random Sampling in the Unit Cube with Respect to 
$\lambda^d$}
\label{SUBSEC:RANDOMDRAWS}

We analyze the setting, where we sample in $[0,1]^d$ with
respect to $\lambda^d$ to maximize 
the objective function $\delta$. A similar analysis for
$\overline{\delta}$ is technically more involved
than the proof of Proposition \ref{FehlerSRA}. 
Furthermore, it leads to a less clear and also
worse result. 
We comment on this at the end of this subsection.

\begin{proposition}
\label{FehlerSRA}
Let $\varepsilon \in (0,1)$, let $n,d \in\N$, and let 
$X=(x^i)_{i=1}^n$ be a sequence in $[0,1)^{d}$. 
Let $x^* = x^*(X)\in [0,1]^d$ satisfy
$\delta(x^*) = \sup_{x\in [0,1]^d}\delta(x)$. Let us assume
that $V_{x^*}\geq \varepsilon$. 
Consider a random point $r\in [0,1]^d$, sampled with respect to the probability 
measure $\lambda^d$. 
If $P^\lambda_\varepsilon = P^\lambda_\varepsilon(X)$
denotes the probability  of the event 
$\{ r\in [0,1]^d \setmid \delta(x^*) - \delta(r) \leq \varepsilon\}$, 
then
\begin{equation}
\label{pepsi}
P^\lambda_\varepsilon \geq \frac{1}{d!} \frac{\varepsilon^d}{V^{d-1}_{x^*}} 
\geq \frac{\varepsilon^d}{d!}\,.
\end{equation}
This lower bound is sharp in the sense that there exist sequences of
point configurations $\{X^{(k)}\}$ such that 
$\lim_{k\to \infty}\,d!\,\varepsilon^{-d} P^\lambda_\varepsilon(X^{(k)})$
converges to 1 as $\varepsilon$ tends to zero.

Let additionally $\epsilon\in (0,1)$ and $R\in\N$. Consider 
random points $r^1,\ldots,r^R \in [0,1]^d$, sampled independently with respect to $\lambda^d$, 
and put $\delta^R := \max^R_{i=1} \delta(r^i)$. If 
\begin{equation}
\label{iteration}
R \geq |\ln(\epsilon)| \Big| \ln \Big( 1-\frac{\varepsilon^d}{d!} \Big) \Big|^{-1}\,,
\end{equation}
then $\delta(x^*) - \delta^R \leq \varepsilon$ with probability at least $1-\epsilon$.
\end{proposition}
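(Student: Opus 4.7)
The plan is to reduce everything to the volume of a $d$-simplex by restricting attention to boxes $r \leq x^*$ (coordinatewise). For such $r$, monotonicity of the counting function gives $A(r,X) \leq A(x^*,X)$, hence
\[
\delta(x^*) - \delta(r) = (V_{x^*} - V_r) - \tfrac{1}{n}\big(A(x^*,X) - A(r,X)\big) \leq V_{x^*} - V_r,
\]
so the event $\{\delta(x^*) - \delta(r) \leq \varepsilon\}$ contains $\{r \in [0,x^*] : V_r \geq V_{x^*} - \varepsilon\}$. Rescaling $y_j := r_j/x^*_j$ (Jacobian $V_{x^*}$) turns this into $\{y \in [0,1]^d : \prod_j y_j \geq 1 - \alpha\}$ with $\alpha := \varepsilon/V_{x^*} \in (0,1]$, the upper bound coming from $V_{x^*} \geq \varepsilon$. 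Setting $z_j := 1 - y_j$ and using the elementary inequality $\prod_j (1-z_j) \geq 1 - \sum_j z_j$ on $[0,1]^d$ (a one-line induction on $d$), the region contains the $d$-simplex $\{z \in [0,1]^d : \sum_j z_j \leq \alpha\}$ of volume $\alpha^d/d!$. Reintroducing the Jacobian yields $P^\lambda_\varepsilon \geq V_{x^*} \cdot \alpha^d/d! = \varepsilon^d/(d!\,V_{x^*}^{d-1}) \geq \varepsilon^d/d!$, the last step using $V_{x^*} \leq 1$.

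For the sharpness claim I would take the one-point sequences $X^{(k)} := \big((1-1/k,\ldots,1-1/k)\big)$. A short case analysis on whether $A(r,X^{(k)}) \in \{0,1\}$ shows $\sup \delta(\cdot, X^{(k)}) = 1 - 1/k$, attained for instance at $x^* = (1-1/k, 1, \ldots, 1)$, and that on the corner cube $(1-1/k, 1]^d$ (the set where $A=1$) one has $\delta(r) = V_r - 1$, which is far below $\delta(x^*) - \varepsilon$ for small $\varepsilon$. Consequently, for $k$ large enough,
\[
P^\lambda_\varepsilon(X^{(k)}) = \lambda^d\bigl\{r \in [0,1]^d : \textstyle\prod_j r_j \geq (1-1/k) - \varepsilon\bigr\} - k^{-d},
\]
the $k^{-d}$ being subtracted because the whole corner cube satisfies $\prod_j r_j \geq (1-1/k) - \varepsilon$ yet lies in the $A = 1$ regime. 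Passing $k \to \infty$ and Taylor-expanding $1 - \prod_j (1-z_j) = \sum_j z_j + O(\|z\|^2)$ at $z_j = 1 - r_j$ gives $\lim_k P^\lambda_\varepsilon(X^{(k)}) = \varepsilon^d/d! + O(\varepsilon^{d+1})$, so $d!\,\varepsilon^{-d}$ times the limit tends to $1$ as $\varepsilon \to 0$.

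The iteration bound is routine: by independence, the probability that all $R$ samples fail is $(1 - P^\lambda_\varepsilon)^R \leq (1 - \varepsilon^d/d!)^R$, and requiring this to be at most $\epsilon$ and taking logarithms gives exactly~(\ref{iteration}). The main obstacle is the sharpness claim: both $\geq$'s in the chain---the monotonicity estimate on $A$ and the simplex lower bound on the product region---must become asymptotically tight simultaneously. The single-point-near-the-corner construction accomplishes both, because $A(r,X^{(k)})$ vanishes off a corner of measure $k^{-d}$ (making the monotonicity an equality on the bulk) while $V_{x^*} \to 1$ removes the extraneous factor $V_{x^*}^{d-1}$.
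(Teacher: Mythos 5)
Your proof is correct, but it reaches the key volume bound by a genuinely different route. For the lower bound, the paper also reduces to the set $\{x\le x^*,\ V_{x^*}-V_x\le\varepsilon\}$, but then invokes the exact power-series evaluation of its Lebesgue measure from the appendix (Proposition~\ref{lemmaLB3}), reading off $\tfrac{1}{d!}\varepsilon^d V_{x^*}^{-(d-1)}$ as the leading term with positive remainder; you instead rescale to the unit cube and use the elementary containment of the simplex $\{\sum_j z_j\le\varepsilon/V_{x^*}\}$ via $\prod_j(1-z_j)\ge 1-\sum_j z_j$. Your version is self-contained and avoids the appendix entirely, at the cost of giving only a one-sided bound; the paper's exact formula is reused both for the sharpness claim and for the $\pi^d$ analysis, so it is not wasted effort there. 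For sharpness, the paper places all $n$ points on the hyperplane $x_1=k/(k+1)$ with the remaining coordinates pushed below $k/(k+1)-\varepsilon$, so that $P^\lambda_\varepsilon(X^{(k)})$ \emph{equals} $\lambda^d(A_\varepsilon(x^*))$ and the asymptotics follow from the power series; your single point near $(1,\dots,1)$ also works, and the subtraction of the corner cube of measure $k^{-d}$ is handled correctly. The only place you are thinner than you should be is the final asymptotic $\lambda^d\{\prod_j r_j\ge 1-\varepsilon\}=\varepsilon^d/d!\,(1+O(\varepsilon))$: the first-order Taylor expansion alone gives only the lower containment you already proved; for the matching upper bound you need the reverse sandwich, e.g.\ $\prod_j(1-z_j)\le e^{-\sum_j z_j}$ forces $\sum_j z_j\le-\ln(1-\varepsilon)=\varepsilon+O(\varepsilon^2)$, so the region sits inside a slightly larger simplex. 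With that one line added, your argument is complete. The last part (the iteration count) is identical to the paper's.
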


Notice, that the case $V_{x^*} < \varepsilon$ left out in 
Proposition \ref{FehlerSRA}
is less important for us,
since our main goal is to find a good lower bound for the star-discrepancy
$\disc(X)$. Indeed, the approximation of $\disc(X)$ up to an admissible error
$\varepsilon$ is a trivial task if $\disc(X) \leq \varepsilon$. 
If $\disc(X) > \varepsilon$, then $V_{x^*} < \varepsilon$ implies 
$\delta(x^*) < \disc(X)$, and the function $\bar{\delta}$
plays the significant role.

\begin{proof}
For $x\leq x^*$ we get
\begin{equation*}
\delta(x) = V_x - \frac{1}{n} \sum^n_{i=1} 1_{[0,x)}(x^i)
\geq \delta(x^*) - (V_{x^*} - V_{x})\,.
\end{equation*}
 Therefore the Lebesgue measure of the set
\begin{equation}
\label{aepsilon}
A_\varepsilon(x^*) := \{x\in [0,1]^d \setmid x\leq x^*, 
V_{x^*}-V_x \leq \varepsilon \}
\end{equation}
is a lower bound for $P^\lambda_\varepsilon$.
Due to Proposition \ref{lemmaLB3}, we have for $d\ge 2$ 
\begin{equation*}
\lambda^d(A_\varepsilon(x^*)) = \frac{1}{d!}\,
\frac{\varepsilon^d}{V^{d-1}_{x^*}}
\sum^\infty_{k=0} b_k(d) \Big( \frac{\varepsilon}{V_{x^*}} \Big)^k
\,,
\end{equation*}
with positive coefficients $b_k(d)$. In particular, we have 
$b_0(d) = 1$. Thus,
\begin{equation}
\label{volab}
\lambda^d(A_\varepsilon(x^*)) \geq \frac{1}{d!}\, 
\frac{\varepsilon^d}{V^{d-1}_{x^*}} 
\geq \frac{\varepsilon^d}{d!}\,,
\end{equation}
and this estimate is obviously also true for $d=1$.
Let us now consider for sufficiently large $k\in\N$ point
configurations 
$X^{(k)}=(x^{(k),i})_{i=1}^n$, where
\begin{equation}
\label{nullmenge}
x^{(k),1}_{1} = \ldots = x^{(k),n}_{1} = k/(k+1) >\varepsilon
\end{equation} 
and
$x^{(k),i}_{j} < k/(k+1) - \varepsilon$ for all $i\in [n]$, $j >1$. 
Then obviously $x^*(X^{(k)}) = (k/(k+1), 1,\ldots, 1)$, and it is easy
to see that $P^\lambda_\varepsilon(X^{(k)}) = \lambda^d(\A(x^*))$. 
From Proposition \ref{lemmaLB3} we get
\begin{equation*}
\lambda^d(\A(x^*)) = 
\bigg( \frac{k+1}{k} \bigg)^{d-1}
\frac{\varepsilon^d}{d!} \bigg(
1 + O \bigg( \frac{k+1}{k}\,\varepsilon \bigg) \bigg)\,.
\end{equation*}
This proves that estimate (\ref{pepsi}) is sharp. Notice that we
assumed (\ref{nullmenge}) only for simplicity. Since $\delta(x^*(X))$ is
continuous in $X$, we can find for fixed $k$ an open set of point
configurations doing essentially the same job as $X^{(k)}$.

Assume now
$\delta(x^*) - \delta^R > \varepsilon$, i.e., 
$\delta(x^*) - \delta(r^i) > \varepsilon$ for all $i\leq R$. The probability
of this event is at maximum $(1 - \varepsilon^d/d!)^R$. This probability is 
bounded from above by $\epsilon$ if $R$ satisfies 
(\ref{iteration}).
\end{proof}

For $d\geq 1$ and $\varepsilon \leq 1/2$ we have
$\big| \ln (1- \varepsilon^d/d!) \big|^{-1}
\sim \,d!\, \varepsilon^{-d}\,.$
In this case we can only assure that 
$\delta^R$ is an $\varepsilon$-approximation of 
$\sup_{x\in [0,1]^d}\delta(x)$ with a certain probability if 
the number $R$ of randomly sampled points is super-exponential in $d$.

Let us end this section with some comments on the setting where 
we are only interested in maximizing
$\bar{\delta}$. If for given $\varepsilon > 0$, $X\in [0,1)^{nd}$
the maximum of $\bar{\delta}$ is achieved in $\bar{x} = \bar{x}(X)
\in [0,1]^d$, and if we want to know the probability of the 
event $\{ r\in [0,1]^d \setmid \bar{\delta}(\bar{x}) 
- \bar{\delta}(r) \leq \varepsilon \}$, there seems to be
no alternative to estimating $\lambda^d(U(\bar{x}))$, where
\begin{equation*}
U(\bar{x}) := \{r\in [0,1]^d \setmid \bar{x} \leq r\,,
\,V_r - V_{\bar{x}} \leq \varepsilon \}\,.
\end{equation*}
It is easy to see that $\lambda^d(U(\bar{x}(X)))$ approaches  
zero if one of the coordinates of $\bar{x}$ tends to 
1 -- regardless of $\varepsilon$ and $V_{\bar{x}}$. We omit a tedious error 
analysis to cover the $\bar{\delta}$-setting. 
 
\subsubsection{Random Sampling in the Unit Cube with Respect to $\pi^d$}
\label{SUBSEC:POLY}

Similarly as in the preceding section, we analyze here the 
setting where, in order to maximize $\delta$, we sample in $[0,1]^d$ with respect to $\pi^d$.

\begin{proposition}
\label{FehlerCPM}
Let $\varepsilon,d,n,X=(x^i)_{i=1}^n$ and $x^*$ as in Proposition~\ref{FehlerSRA}.
Again assume $V_{x^*} \geq \varepsilon$. 
Consider a random point $r\in[0,1]^d$, sampled with respect to the 
probability measure $\pi^d$. 
If $P^\pi_\varepsilon = P^\pi_\varepsilon(X)$
denotes the probability  of the event 
$\{ r\in [0,1]^d \setmid \delta(x^*) - \delta(r) \leq \varepsilon\}$, 
then $P^\pi_\varepsilon \geq \varepsilon^d$.
This lower bound is sharp, since there exists a point configuration
$X$ such that $P^\pi_\varepsilon(X) = \varepsilon^d$.

Let additionally $\epsilon\in (0,1)$ and $R\in\N$. Consider 
random points $r^1,\ldots,r^R \in [0,1]^d$, sampled independently with respect to $\pi^d$, 
and put $\delta^R := \max^R_{i=1} \delta(r^i)$. If 
\begin{equation}
\label{iteration2}
R \geq |\ln(\epsilon)| | \ln( 1- \varepsilon^d)|^{-1}\,,
\end{equation}
then $\delta(x^*) - \delta^R \leq \varepsilon$ with probability at least $1-\epsilon$.
\end{proposition}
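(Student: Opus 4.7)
The plan is to mirror the strategy for Proposition~\ref{FehlerSRA}, but to exploit the homogeneity of $\pi^d$ so that no factorial factor is lost. As in that proof, the first step is to observe that the set
\begin{equation*}
A_\varepsilon(x^*) = \{x\in[0,1]^d \setmid x\le x^*,\ V_{x^*}-V_x\le \varepsilon\}
\end{equation*}
from~(\ref{aepsilon}) is contained in the favourable event: for $x\le x^*$ one has $A(x,X)\le A(x^*,X)$, and hence $\delta(x) \ge \delta(x^*)-(V_{x^*}-V_x) \ge \delta(x^*)-\varepsilon$. So $P^\pi_\varepsilon \ge \pi^d(A_\varepsilon(x^*))$, and the whole task reduces to bounding this $\pi^d$-measure from below by $\varepsilon^d$.

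Rather than computing $\pi^d(A_\varepsilon(x^*))$ exactly (the analogue of what Proposition~\ref{lemmaLB3} does for $\lambda^d$), I would simply cover it by one axis-parallel sub-box. Set $b_j := x^*_j(1-\varepsilon/V_{x^*})^{1/d}$, which lies in $[0,x^*_j]$ because $V_{x^*}\ge\varepsilon$. A direct computation gives $V_b = V_{x^*}-\varepsilon$, so the box $B:=\prod_{j=1}^d [b_j,x^*_j]$ is contained in $A_\varepsilon(x^*)$. Since $\pi^1([a,c])=c^d-a^d$ on each coordinate,
\begin{equation*}
\pi^d(B) = \prod_{j=1}^d \bigl((x^*_j)^d - b_j^d\bigr) = \prod_{j=1}^d (x^*_j)^d\cdot\frac{\varepsilon}{V_{x^*}} = V_{x^*}^d\cdot\Bigl(\frac{\varepsilon}{V_{x^*}}\Bigr)^d = \varepsilon^d.
\end{equation*}
The fact that $V_{x^*}$ cancels is exactly the point of changing the measure from $\lambda^d$ to $\pi^d$: it removes the factor $V_{x^*}^{-(d-1)}/d!$ that appeared in Proposition~\ref{FehlerSRA}.

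For sharpness I would exhibit an explicit configuration that makes the bound tight. Take $X=(x^i)_{i=1}^n$ with $x^i=(\varepsilon,0,\ldots,0)$ for every $i$. A short case analysis shows that $\delta$ attains its maximum $\varepsilon$ at $x^*=(\varepsilon,1,\ldots,1)$, and that outside a $\pi^d$-null set the favourable event $\{\delta(x^*)-\delta(r)\le\varepsilon\}$ coincides with the slab $\{r\in[0,1]^d:r_1\le\varepsilon\}$, whose $\pi^d$-measure is precisely $\pi^1([0,\varepsilon])=\varepsilon^d$. The iteration bound then follows immediately from independence: $\Prob(\delta(x^*)-\delta^R>\varepsilon)\le (1-\varepsilon^d)^R$, and solving $(1-\varepsilon^d)^R\le\epsilon$ yields~(\ref{iteration2}).

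The only step I expect to require genuine care is the sharpness verification, because one has to argue both that $x^*$ really is a global maximiser of $\delta$ for the chosen $X$ and that the favourable event does not spill out of $\{r_1\le\varepsilon\}$ on a set of positive $\pi^d$-measure. Both points reduce to treating three cases separately ($r_1\le\varepsilon$; $r_1>\varepsilon$ with some $r_j=0$; $r_1>\varepsilon$ with all $r_j>0$), after which everything is routine.
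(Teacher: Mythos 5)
Your proof is correct, and the key step is handled by a genuinely different (and more elementary) argument than the paper's. The paper also starts from $P^\pi_\varepsilon \ge \pi^d(A_\varepsilon(x^*))$, but then invokes Proposition~\ref{B2} from the appendix, whose proof computes $\pi^d(A_\varepsilon(z))$ exactly at $V_z=\varepsilon$ (where it equals $\prod_j z_j^d=\varepsilon^d$) and then appeals to the monotonicity of $V_z\mapsto \pi^d(A_\varepsilon(z))$ established in Lemma~\ref{B1} via the closed-form expression \eqref{b1} and a sign analysis of $\partial_{V_z}G$. Your one-box minorant $B=\prod_j[b_j,x^*_j]$ with $b_j=x^*_j(1-\varepsilon/V_{x^*})^{1/d}$ bypasses all of that: the inclusion $B\subseteq A_\varepsilon(x^*)$ and the identity $\pi^d(B)=\varepsilon^d$ are immediate, and the cancellation of $V_{x^*}$ makes transparent why $\pi^d$ beats $\lambda^d$. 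What you lose relative to the appendix machinery is only the extra information it provides (the exact value of $\pi^d(A_\varepsilon(z))$, its monotonicity, and the two-sided bounds of Proposition~\ref{B2} used elsewhere); for this proposition none of that is needed. Your sharpness example (all $n$ points equal to $(\varepsilon,0,\ldots,0)$) is a clean special case of the paper's family ($x^i_1=\varepsilon$, $x^i_j<\varepsilon$ for $j>1$, with sufficiently many points at $(\varepsilon,0,\ldots,0)$), and your three-case check that the favourable event is, up to a $\pi^d$-null set, exactly $\{r_1\le\varepsilon\}$ is precisely the verification the paper leaves implicit. The concluding independence argument is identical in both proofs.
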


\begin{proof}
Clearly $P^\pi_\varepsilon \geq \pi^d(\A(x^*))$, where $\A(x^*)$ is defined
as in (\ref{aepsilon}). Due to Proposition \ref{B2} we have 
$\pi^d(\A(x^*)) \geq \varepsilon^d$. Let us now consider the point
configuration $X$, where $x^{1}_1 = \ldots = x^n_1 = \varepsilon$ 
and $x^i_j < \varepsilon$ for all $i\in [n]$, $j>1$. 
Furthermore, at least an $\varepsilon^{-1}$-fraction of the points
should be equal to $(\varepsilon, 0,\ldots,0)$. Then obviously
$x^*(X) = (\varepsilon,1,\ldots,1)$ and $P^\lambda_\varepsilon(X)
= \pi^d(\A(x^*)) = \varepsilon^d$. 
 
Let us now assume that $\delta(x^*) - \delta^R > \varepsilon$, i.e., 
$\delta(x^*) - \delta(r_i) > \varepsilon$ for all $i\leq R$. This happens
with probability not larger than $(1-\varepsilon^d)^R$. Therefore we have 
$(1-\varepsilon^d)^R \leq \epsilon$ if $R$ satisfies (\ref{iteration2}).
\end{proof}

If $d\geq 1$ and $\varepsilon \leq 1/2$, then
$|\ln(1-\varepsilon^d)|^{-1} \sim \varepsilon^{-d}$. Here the number
of iterations $R$ ensuring with a certain probability that $\delta^R$ 
is an $\varepsilon$-approximation of 
$\sup\{\delta(x) \setmid x\in [0,1]^d\}$ is still exponential in $d$,
but at least not super-exponential as in the previous
section.

Altogether we see that a simple sampling algorithm relying on the probabilistic
measure $\pi^d$ rather than on $\lambda^d$ is more likely to find larger values of
$\delta$.

\subsection{Analysis of Rounding to the Coordinate Grid}
\label{SUBSEC:RPD}
As described in Sections \ref{SUBSEC:NEIGHBORS} and \ref{subsec:starting}, our non-uniform
sampling strategy on the grids $\bar{\Gamma}(X)$ and $\Gamma(X)$ for the objective
functions $\delta$ and $\bar{\delta}$ consists of sampling in $[0,1]^d$ with 
respect to $\pi^d$ and then rounding the sampled point $y$ up and down to
grid points $y^+$ and $y^-$, respectively. This induces discrete probability
distributions $w_u = (w_u(z))_{z\in\bar{\Gamma}(X)}$ and 
$w_l = (w_l(z))_{z\in\Gamma(X)}$ on $\bar{\Gamma}(X)$ and $\Gamma(X)$, respectively.
If we use additionally the rounding procedures ``snapping up'' and ``snapping down'',
as described in Section \ref{SUBSEC:SNAPPING}, this will lead to modified probabilistic
distributions $w^{sn}_u= (w^{sn}_u(z))_{z\in\bar{\Gamma}(X)}$ and 
$w^{sn}_l = (w^{sn}_l(z))_{z\in\Gamma(X)}$ on $\bar{\Gamma}(X)$ and $\Gamma(X)$, respectively.
In dimension $d=1$ the probability distributions $w_u$ and $w^{sn}_u$ as well as 
$w_l$ and $w_l^{sn}$ are equal, since every test box is a critical one.
Essentially we prove in the next section that in the one-dimensional case 
sampling with respect to the probability distributions
$w_u = w_u^{sn}$ [$w_l = w_l^{sn}$] leads to larger values of $\delta$ [$\bar{\delta}$]
than sampling with respect to the uniform distribution on $\bar{\Gamma}(X)$ [$\Gamma(X)$].

\subsubsection{Analysis of the 1-Dimensional Situation}

Recall that in the 1-dimensional case $\pi =\pi^1$ coincides with 
$\lambda =\lambda^1$. 

To analyze the 1-dimensional situation, let $X:=(x^i)_{i=1}^n$ be the 
given point configuration in $[0,1)$. 
Without loss of generality we assume that $0 \leq x^1 < \cdots < x^n <1$. 
Since $\delta^*(1) = 0$ we do not need to consider the whole grid $\bar{\Gamma}(X)$ but can restrict
ourselves to the set $\Gamma(X) = \{x^1,\ldots,x^n\}$. 
For the same reason, let us set $y^+:=x^1$ if $y >x^n$ (recall that, following the description given in Section~\ref{SUBSEC:NEIGHBORS}, we set $y^-:=x^n$ for $y<x^1$ anyhow).

As discussed above, we take points randomly from $\Gamma(X)$, but instead of 
using equal probability weights on $\Gamma(X)$, we 
use the probability distributions $w_u = w_u^{sn}$ and $w_l=w_l^{sn}$ 
on $\Gamma(X)$ to maximize our objective functions $\delta$ 
and $\bar{\delta}$, respectively.   
If we put $x^0 := x^n - 1$ and $x^{n+1} := x^1 + 1$, then the corresponding 
probability weights for $\delta$ and $\bar{\delta}$ are given
by $w_l(x^i) := x^i - x^{i-1}$ and $w_u(x^i) := x^{i+1} - x^i$, respectively.

In the next lemma we will prove the following statements rigorously:
If one wants to sample a point
$\tau\in\Gamma(X)$ with $\delta(\tau)$ as large as possible or 
if one wants to enlarge the chances to sample the point $\tau$ where 
$\delta$ takes its maximum, its preferable to use the weights $w_l$
instead of the equal weights $1/n$ on $\Gamma(X)$. 
Similarly, it is preferable to employ the weights $w_u(x^i)$, $i=1,\ldots,n$,
instead of equal weights if one wants to increase the expectation of 
$\bar{\delta}$ or the chances of sampling the maximum of 
$\bar{\delta}$. 

\begin{lemma}
\label{1dHeuristik}
Let $d=1$ and $\tau$, $\bar{\tau}\in \Gamma(X)$ with 
$\delta(\tau) = \sup_{z\in [0,1]} \delta(z)$ and 
$\bar{\delta}(\bar{\tau}) = \sup_{z\in [0,1]} \bar{\delta}(z)$.
Then we have
$w_l(\tau) \geq 1/n$ and
$w_u(\bar{\tau}) \geq 1/n$.

Furthermore, let $\Expec$, $\Expec_l$, and $\Expec_u$ denote the 
expectations with respect to the uniform 
weights $\{1/n\}$, the weights $\{w_l(x^i)\}$, and the weights 
$\{w_u(x^i)\}$ on the 
probability space $\Gamma(X)$, respectively. Then
$\Expec_l(\delta) \geq \Expec(\delta)$
and
$\Expec_u(\bar{\delta}) \geq \Expec(\bar{\delta})$.
\end{lemma}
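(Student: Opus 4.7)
The plan is to collapse the whole lemma to one cyclic telescoping identity. Set $D_i := \delta(x^i) = x^i - (i-1)/n$, $\bar{D}_i := \bar{\delta}(x^i) = i/n - x^i$, $v_i := w_l(x^i) - 1/n$ and $\bar{v}_i := w_u(x^i) - 1/n$. My first step is to verify by direct expansion (using the conventions $x^0 = x^n - 1$ and $x^{n+1} = x^1 + 1$, and setting $D_0 := D_n$, $\bar{D}_{n+1} := \bar{D}_1$ accordingly) the identities
\begin{equation*}
v_i = D_i - D_{i-1} \quad\text{and}\quad \bar{v}_i = \bar{D}_i - \bar{D}_{i+1} \qquad (i = 1, \ldots, n).
\end{equation*}
For $i \geq 2$ this is trivial; the only case that needs a bit of book-keeping is $i = 1$ on the left (respectively $i = n$ on the right), where the wrap-around convention is essential.

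With this identity in hand the first assertion is immediate. Writing $\tau = x^{i^*}$, cyclic maximality of $D_{i^*}$ among $D_1, \ldots, D_n$ (with $D_0 := D_n$) gives $D_{i^*} \geq D_{i^*-1}$, hence $v_{i^*} \geq 0$, which is exactly $w_l(\tau) \geq 1/n$. The bound $w_u(\bar{\tau}) \geq 1/n$ follows symmetrically from $\bar{v}_{j^*} = \bar{D}_{j^*} - \bar{D}_{j^*+1} \geq 0$ where $\bar{\tau} = x^{j^*}$.

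For the expectation part I would rewrite
\begin{equation*}
\Expec_l(\delta) - \Expec(\delta) = \sum_{i=1}^n (w_l(x^i) - 1/n)\, D_i = \sum_{i=1}^n v_i D_i
\end{equation*}
and apply cyclic summation by parts. The relation $v_{i+1} = D_{i+1} - D_i$ yields $\sum_i v_{i+1} D_i = \sum_i D_i D_{i+1} - \sum_i D_i^2$, which equals $-\sum_i v_i D_i$ after one cyclic shift, while $\sum_i v_{i+1} D_{i+1} = \sum_i v_i D_i$ by relabelling. Subtracting these gives
\begin{equation*}
2 \sum_i v_i D_i \;=\; \sum_i v_{i+1}(D_{i+1} - D_i) \;=\; \sum_i v_{i+1}^2 \;\geq\; 0,
\end{equation*}
so $\Expec_l(\delta) = \Expec(\delta) + \tfrac{1}{2}\sum_i v_i^2$. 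The inequality $\Expec_u(\bar{\delta}) \geq \Expec(\bar{\delta})$ is handled identically via $\bar{v}_i = \bar{D}_i - \bar{D}_{i+1}$.

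The main obstacle I anticipate is not the bookkeeping but spotting the telescoping identity $v_i = D_i - D_{i-1}$ (cyclically) in the first place; once this observation is available both claims become essentially formal, and in particular $\Expec_l(\delta) - \Expec(\delta)$ is revealed as the manifestly nonnegative quadratic form $\tfrac{1}{2}\sum_i v_i^2$, which also shows that equality holds precisely when all gaps $x^i - x^{i-1}$ equal $1/n$.
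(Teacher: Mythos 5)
Your proof is correct, and for the expectation inequalities it takes a genuinely different route from the paper's. The weight bounds $w_l(\tau)\geq 1/n$ and $w_u(\bar{\tau})\geq 1/n$ are in substance the paper's own argument: the paper supposes $w_l(x^{\nu})<1/n$ and derives $\delta(x^{\nu-1})>\delta(x^{\nu})$ when $\nu>1$ and $\delta(x^{n})>\delta(x^{1})$ when $\nu=1$, which is precisely your cyclic inequality $D_{i^*}\geq D_{i^*-1}$ with $D_0:=D_n$. For the expectations, however, the paper argues by induction on $n$: it isolates the maximizer $x^{\nu}$, shifts its excess weight onto $x^{\nu+1}$ using $\delta(x^{\nu})\geq\delta(x^{\nu+1})$ together with the already-proved weight bound, rescales the remaining points by $(n+1)/n$, and invokes the induction hypothesis. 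Your cyclic summation by parts instead produces the exact identity
\[
\Expec_l(\delta)-\Expec(\delta)=\frac{1}{2}\sum_{i=1}^{n}\Bigl(w_l(x^i)-\frac{1}{n}\Bigr)^{2},
\]
whose ingredients I checked: the telescoping relation $v_i=D_i-D_{i-1}$ holds for all $i$, including the wrap-around case $i=1$ (where $x^0=x^n-1$ gives $v_1=x^1-x^n+1-1/n=D_1-D_n$) and, for the barred version, $i=n$; and the summation-by-parts manipulation is valid because every index shift is taken modulo $n$. Your route is induction-free, yields an identity rather than a bare inequality, and characterizes the equality case (all gaps equal to $1/n$) --- none of which the paper's induction delivers. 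What the paper's argument buys in exchange is only that it explicitly recycles the first claim as the engine of its induction step, whereas in your proof both halves of the lemma fall out of the single observation $v_i=D_i-D_{i-1}$, which is arguably cleaner. One small point worth stating explicitly, as the paper does: the points must be assumed pairwise distinct and sorted, since $D_i=x^i-(i-1)/n$ and $\bar{D}_i=i/n-x^i$ rely on $[0,x^i)$ containing exactly $i-1$ of the points and $[0,x^i]$ exactly $i$ of them.
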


\begin{proof}
Let $\nu\in [n]$ with $\tau = x^\nu$.
Assume first $w_l(x^\nu) < 1/n$, i.e., $x^\nu - x^{\nu-1} < 1/n$.
If $\nu>1$, then
\begin{equation*}
\delta(x^{\nu-1}) = x^{\nu-1} - \frac{\nu-2}{n} > 
x^\nu - \frac{\nu-1}{n} = \delta(x^\nu)\,.
\end{equation*}
If $\nu=1$, then however
\begin{equation*}
\delta(x^n) = x^n - \frac{n-1}{n} = x^0 + \frac{1}{n} > x^1 = \delta(x^\nu)\,.
\end{equation*}
So both cases result in a contradiction.

We prove now $\Expec_l(\delta) \geq \Expec(\delta)$ by induction over the 
cardinality $n$ of $X$.
For $n=1$ we trivially have $\Expec_l(\delta) = x^1 = \Expec(\delta)$.

Therefore, let the statement be true for $n$ and consider an ordered set
$\Gamma(X) := \{x^1,\ldots, x^{n+1}\}$. Let $\delta$ achieve its maximum
in $x^{\nu} \in \Gamma(X)$. 
We already proved that  
$w_l(x^\nu) = x^\nu - x^{\nu-1} \geq 1/(n+1)$ holds. With the notation
\begin{equation*}
\tilde{x}^i := x^i \hspace{2ex}\text{if $1\leq i <\nu$,}\hspace{3ex}
\tilde{x}^i := x^{i+1} - \frac{1}{n+1} \hspace{2ex}\text{if $i\geq \nu$,}
\end{equation*}
and 
\begin{equation*}
\hat{x}^i := \frac{n+1}{n}\,\tilde{x}^i\,,\hspace{2ex}i=1,\ldots,n\,,
\hspace{2ex}\text{and}\hspace{2ex}
\hat{x}^0 := \hat{x}^n -1\,,
\end{equation*}
we get
\begin{equation*}
\begin{split}
\Expec_l(\delta) &=
\sum^{n+1}_{i=1} w_l(x^i)\delta(x^i)
= \frac{\delta(x^\nu)}{n+1} 
+ \Big( w_l(x^\nu) - \frac{1}{n+1} \Big) \delta(x^\nu) 
+  \sum^{n+1}_{i=1\atop i\neq \nu} w_l(x^i)\delta(x^i)\\
&\geq\, \frac{\delta(x^\nu)}{n+1} 
+ \Big( w_l(x^{\nu+1}) + w_l(x^\nu) - \frac{1}{n+1} \Big) \delta(x^{\nu+1}) 
+  \sum^{n+1}_{i=1\atop i\notin \{\nu, \nu +1\}} w_l(x^i)\delta(x^i)\\
&=\, \frac{\delta(x^\nu)}{n+1}+ \Big( \tilde{x}^{1} - \tilde{x}^n + \frac{n}{n+1} 
\Big)\,\tilde{x}^1 + \sum^n_{i=2} (\tilde{x}^i - \tilde{x}^{i-1}) 
\Big( \tilde{x}^i - \frac{i-1}{n+1} \Big)\\
&=\, \frac{\delta(x^\nu)}{n+1}+ 
\Big( \frac{n}{n+1} \Big)^2 \sum^n_{i=1} (\hat{x}^i - \hat{x}^{i-1}) 
\Big( \hat{x}^i - \frac{i-1}{n} \Big)\,.
\end{split}
\end{equation*}
On the other hand we have
\begin{equation*}
\begin{split}
\Expec(\delta) &= 
\sum^{n+1}_{i=1} \frac{1}{n+1} \delta(x^i)
=\, \frac{\delta(x^\nu)}{n+1} + \sum^n_{i=1}
\frac{1}{n+1} \Big( \tilde{x}^i - \frac{i-1}{n+1} \Big)\\
&=\, \frac{\delta(x^\nu)}{n+1} + \Big( \frac{n}{n+1} \Big)^2 \sum^n_{i=1}
\frac{1}{n} \Big( \hat{x}^i - \frac{i-1}{n} \Big)\,.
\end{split}
\end{equation*}
These calculations and our induction hypothesis, applied to  
$\{\hat{x}^1,\ldots, \hat{x}^n\}$, lead to
$\Expec_l(\delta) \geq \Expec(\delta)$. 
For $\mu \in [n]$ with $\bar{\tau}=x^\mu$
the inequalities
$w_u(x^\mu)\geq 1/n$ and 
$\Expec_u(\bar{\delta}) \geq \Expec(\bar{\delta})$
can be proved in a similar manner.
\end{proof}

\subsubsection{Analysis of Higher Dimensional Situations $d \geq 2$}
\label{SUBSUBSEC:WEIGHTS}
If we turn to the situation in dimension $d\geq 2$, we first face a technical
difference: For many sequences $X$ the supremum
$\sup_{x\in [0,1]^d} \delta(x)$ will be achieved by some $x\in \bar{\Gamma}(X)$ with
$x_j = 1\notin \{x^{1}_j,\ldots, x^n_j\}$ for at least one $j\in [d]$. Therefore,
we typically have to consider the whole grid $\bar{\Gamma}(X)$ if we want to 
maximize $\delta$. 

Let us now have a look at the weight functions $w_l$, $w_u$ 
induced by our algorithms in dimension $d\geq 2$. Actually,
here we only consider the simpler Lebesgue measure $\lambda^d$, but it is obvious 
how to modify the definitions
and arguments below to cover the $\pi^d$-setting.

For all $j\in [d]$ let $\nu_j := |\bar{\Gamma}_j(X)|$. As in Section~\ref{SUBSEC:WF} let
$\phi_j: [ \nu_j ]\to \bar{\Gamma}_j(X)$ be the ordering of the 
set $\bar{\Gamma}_j(X)$. Set $\phi_j(0):=0$.
For $y=(\phi_1(i_1), \ldots, \phi_d(i_d))\in \bar{\Gamma}(X)$, 
$i_1,\ldots,i_d \in [\nu_j]$, we define the weight $w_l(\phi)$ by
\begin{equation*}
w_l(y) := \prod^d_{j=1} \big( \phi_j(i_j) - \phi_j(i_j-1) \big)\,.
\end{equation*} 
For the definition of the weights 
$w_u(y)$ let $\tilde{\phi}_j(i)=\phi_j(i)$ for 
$i \in [\nu_j-1]$ and let $\tilde{\phi}_j(\nu_j):=\phi_j(1)+1$. 
For $\tilde{y}=(\tilde{\phi}_1(i_1), \ldots, \tilde{\phi}_d(i_d)) \in \Gamma(X)$,
$i_1,\ldots, i_d \in [\nu_j -1]$, let
\begin{equation*}
w_u(y) := \prod^d_{j=1} \big( \tilde{\phi}_j(i_j+1) - \tilde{\phi}_j(i_j) \big)\,.
\end{equation*}

Let $y \in\bar{\Gamma}(X)$ and $\tilde{y}\in \Gamma (X)$. Then the weights 
$w_l(y)$ and $w_u(\tilde{y})$ are obviously the probabilities
that after sampling a point $z$ in $[0,1]^d$ with respect to $\lambda^d$,
we end up with $z^{+} = y$ and $z^{-} = \tilde{y}$, respectively.

\begin{figure}
\center 
\psfrag{1}{$x^1$}
\psfrag{2}{$x^2$} 
\psfrag{3}{$x^3$}
\psfrag{4}{$x^4$}
\psfrag{5}{$x^5$}
\psfrag{6}{$\gamma$}

{\epsfig{file=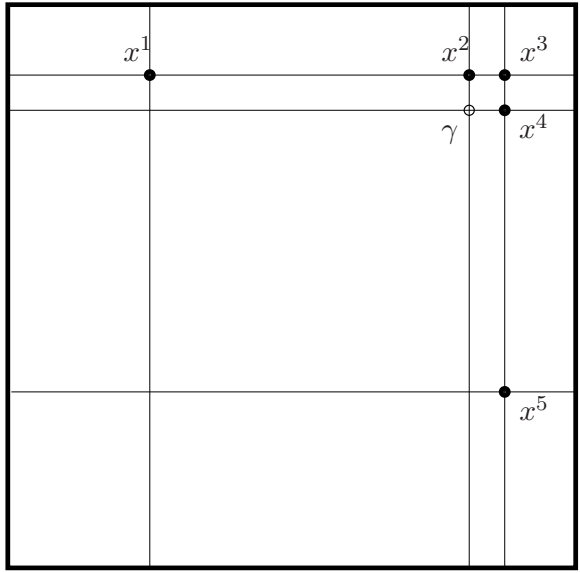,width=.6\textwidth}}
\caption{\label{fig_lower} $\delta$ takes its maximum in $x^3$, but
  $w_l(x^3) < 1/36$}
\end{figure}

\begin{figure}
\center 
\psfrag{1}{$x^1$}
\psfrag{2}{$x^2$} 
\psfrag{3}{$x^3$}
\psfrag{4}{$x^4$}
\psfrag{5}{$x^5$}
\psfrag{6}{$\gamma$}
{\epsfig{file=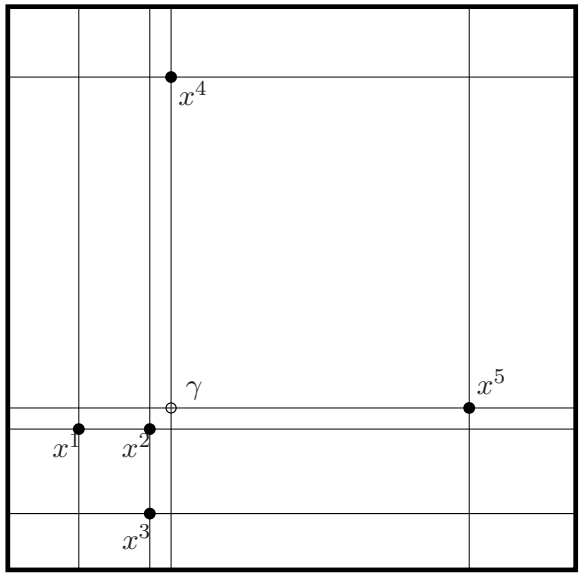,width=.6\textwidth}}
\caption{\label{fig_upper} $\bar{\delta}$ takes its maximum in
  $x^2$, but $w_u(x^2) < 1/36$}
\end{figure} 

The simple examples in Figures \ref{fig_lower} and \ref{fig_upper} with
$d=2$ and $n=5$
illustrate that we cannot
prove the extension of the weight inequalities from
Lemma \ref{1dHeuristik} in $d\geq 2$.
More precisely: 
If $\tau\in \bar{\Gamma}(X)$ and $\bar{\tau}\in \Gamma(X)$ are points with
$\delta(\tau) = \sup_{y\in [0,1]^d} \delta(y)$ and 
$\bar{\delta}(\bar{\tau}) = \sup_{y\in [0,1]^d} \bar{\delta}(y)$,
then in general we do \emph{not} have
\begin{equation}
\label{giltnicht}
w_l(\tau) \geq \prod^d_{j=1}\frac{1}{\nu_j }
\text{ and }
w_u(\bar{\tau}) \geq \prod^d_{j=1}\frac{1}{\nu_j-1}\,,
\end{equation} 
even $w_l(\tau)$ or $w_u(\bar{\tau}) \geq (n+1)^{-d}$ need not to
be satisfied.
Notice that for fixed $d$ and $n$ the sets of counterexamples to each of the 
inequalities in (\ref{giltnicht}) have strictly positive
measure. Indeed, the suprema of $\delta$ and $\bar{\delta}$ 
are continuous in $X$, and the
weights $w_l$, $w_u$ are continuous in $X$ as long as for each $j\in [d]$
all coordinates $x^1_j,\ldots, x^n_j$ are distinct. But it
is, e.g., easy to rearrange the counterexamples in Figure \ref{fig_lower} and
\ref{fig_upper} slightly such that these constraints are satisfied.

But notice also that in both figures the grid point $\gamma$ provides
a good approximation of the actual maximum of $\delta$ and
$\bar{\delta}$, and $\gamma$ has a rather large weight
$w_l$ and $w_u$, respectively.
Furthermore, the point sets in Figure \ref{fig_lower} and
\ref{fig_upper} are quite artificial and also far away from being
well-distributed. 
For random or low-discrepancy sets these probability weights can still
be very useful -- this is also confirmed by our numerical experiments,
see Section~\ref{SEC:EXPERIMENTS}. 
Moreover, the invalidity of (\ref{giltnicht}) does not 
necessarily mean that the expectations $\Expec(\delta)$ and 
$\Expec(\bar{\delta})$ are larger than $\Expec_l(\delta)$
and $\Expec_u(\bar{\delta})$, respectively.

Actually, if we use additionally the procedures ``snapping up'' and ``snapping
down'' and look at the induced probability distribution  $w_l^{sn}$ and $w_u^{sn}$,
respectively, then $x^3$ has the maximum weights $w_l^{sn}$ of all critical 
points in Figure \ref{fig_lower}, and $x^2$ has the maximum weight $w^{sn}_u$ of all
critical points in Figure \ref{fig_upper}. Thus adding the snapping procedures can 
change the situation decisively.

Additional theoretical results in this direction
would be interesting. 

\subsection{The Chances of a Grid Point Beeing a Critical Point}
\label{SUBSEC:NUMBERS}

If $X=(x^1,\ldots,x^n)$ is a sequence in $[0,1)^d$ which has been chosen randomly with 
respect to the Lebesgue measure and if $x\in\bar{\Gamma}(X)$, then the larger the components of $x$,
the higher is the probability of $x$ being a $\delta(X)$-critical
point. 
The same holds for $\tilde{x}\in\Gamma(X)$ and 
$\bar{\delta}(X)$, respectively. 

\begin{proposition}
\label{Deltavert}
Consider $[0,1)^{nd}$ as a probability space endowed with the 
probability measure $\lambda^{nd}$. Let $\iota := (i_1,\ldots, i_d)
\in [n+1]^d$.
If $k$ indices $i_{\nu(1)},\ldots,i_{\nu(k)}$ of the $i_j$, $j=1, \ldots,d$, 
are at most $n$ and the remaining 
$d-k$ of them are equal to $n+1$, then for uniformly distributed 
random variable $X$ in $[0,1)^{nd}$
the multi-index
$\iota$ is $\delta(X)$-critical with probability
\begin{equation*}
\bigg( \frac{(n-k)!}{n!} \bigg)^{k-1} 
\prod^k_{j=1} \bigg( \prod^{k-1}_{\ell =1} \max\{i_{\nu(j)} -\ell,0\} \bigg)\,.
\end{equation*}
\end{proposition}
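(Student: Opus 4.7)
The plan is to translate the event ``$\iota$ is $\delta(X)$-critical'' into a purely combinatorial condition on the rank structure of $X$ within the $k$ nontrivial coordinate directions, and then count favorable rank configurations. First, I would dispose of the $d-k$ trivial coordinates: whenever $i_j = n+1$ the corresponding coordinate of $\Phi(\iota)$ equals $1$, so $S_j(\Phi(\iota))$ is automatically critical. Setting $a_j := i_{\nu(j)} \in [n]$ for $j = 1,\ldots,k$, criticality of $\iota$ thus reduces to the following statement: for every $j \in [k]$ there exists a sample index $m_j \in [n]$ with $x^{m_j}_{\nu(j)} = \phi_{\nu(j)}(a_j)$ and $x^{m_j}_{\nu(\ell)} < \phi_{\nu(\ell)}(a_\ell)$ for all $\ell \in [k]\setminus\{j\}$ (coordinates outside $\{\nu(1),\ldots,\nu(k)\}$ give no constraint, since $\Phi(\iota)$ has coordinate $1$ there).

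Next, I would exploit that under $\lambda^{nd}$ the $n$ values $x^1_{\nu(\ell)},\ldots,x^n_{\nu(\ell)}$ are a.s.\ pairwise distinct for each $\ell \in [k]$, so the rank permutations $\sigma_\ell \in S_n$ defined by $\sigma_\ell(m) := \phi_{\nu(\ell)}^{-1}(x^m_{\nu(\ell)})$ are well-defined and $(\sigma_1,\ldots,\sigma_k)$ is uniformly distributed on $S_n^k$. Since the event depends only on ranks, its probability equals the number of favorable tuples divided by $(n!)^k$. In rank language the condition reads: for every $j \in [k]$ the unique index $m_j := \sigma_j^{-1}(a_j)$ must satisfy $\sigma_\ell(m_j) \leq a_\ell - 1$ for all $\ell \neq j$. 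A crucial observation is that the witnesses $m_1,\ldots,m_k$ are forced to be pairwise distinct: $m_i = m_j$ with $i \neq j$ would give $a_j = \sigma_j(m_j) = \sigma_j(m_i) \leq a_j - 1$, a contradiction.

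The counting step then runs as follows. For each ordered tuple $(m_1,\ldots,m_k)$ of distinct sample indices---there are $n!/(n-k)!$ of them---each permutation $\sigma_\ell$ must map $m_\ell$ to rank $a_\ell$, inject the remaining $k-1$ witnesses into the $a_\ell - 1$ ranks $\{1,\ldots,a_\ell - 1\}$, and distribute the remaining $n-k$ non-witness indices arbitrarily among the remaining $n-k$ ranks. This yields $\max\{(a_\ell-1)(a_\ell-2)\cdots(a_\ell-k+1),\,0\}\cdot(n-k)!$ valid permutations $\sigma_\ell$, with the $\max$ absorbing the degenerate regime $a_\ell < k$. Multiplying this per-coordinate count over $\ell \in [k]$, multiplying by the $n!/(n-k)!$ witness tuples, and dividing by $(n!)^k$ telescopes directly to the asserted formula $\bigl((n-k)!/n!\bigr)^{k-1}\prod_{j=1}^k\prod_{\ell=1}^{k-1}\max\{a_j-\ell,0\}$.

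The main pitfalls are not conceptual but rather bookkeeping. Chiefly, one must verify that different ordered witness tuples index \emph{disjoint} sets of favorable configurations, so that the enumeration does not overcount; this holds because $m_j$ is completely determined by $\sigma_j$ via the relation $m_j = \sigma_j^{-1}(a_j)$, so each favorable $(\sigma_1,\ldots,\sigma_k)$ has a unique associated witness tuple. One must also treat the degenerate regime $a_j < k$ uniformly, for which the $\max\{\cdot,0\}$ convention already used in the statement is the appropriate device.
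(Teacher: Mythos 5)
Your proposal is correct and follows essentially the same route as the paper's proof: reduce to the $k$ nontrivial coordinates, translate criticality into rank conditions on the coordinate permutations, observe that the $k$ witness indices must be pairwise distinct, and count the permutation tuples coordinatewise before dividing by $(n!)^k$. Your write-up is if anything slightly more explicit than the paper's on the no-overcounting point (each favorable configuration determines its witness tuple uniquely) and on the degenerate case $i_{\nu(j)} < k$.
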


\begin{proof}
Let $\Phi=(\phi_1,\ldots,\phi_d)$ 
be as in Section~\ref{SUBSEC:SNAPPING}.
Since the event  
that for all coordinates $j \in [d]$ we have $|\bar{\Gamma}_j(X)|=n+1$
holds with probability $1$, we restrict ourselves to this 
situation. 

Without loss of generality, we may assume that $i_1,\ldots,i_k \leq n$ 
and $i_{k+1} = \ldots = i_d = n+1$.  Obviously, 
$S_{k+1}(\Phi(\iota)), \ldots, S_d(\Phi(\iota))$ are $\delta$-critical
surfaces, since $\phi_{k+1}(n+1) =\ldots = \phi_d(n+1) =1$.
For $i=1,\ldots,k$ let $\sigma_i = \sigma_i(X):[n]\to [n]$ 
be the permutation with
\begin{equation*}
x^{\sigma_i(1)}_i < x^{\sigma_i(2)}_i < \cdots < x^{\sigma_i(n)}_i 
< 1\,.
\end{equation*}
Clearly, $\Phi(\iota)_j = \phi_j(i_j) = x^{\sigma_j(i_j)}_j$ for all
$j\in [k]$.
Since $S_i(x)\cap S_j(x) = \emptyset$ for all $i\neq j$ and all 
$x\in [0,1]^d$, the surfaces $S_{1}(\Phi(\iota)), \ldots, S_k(\Phi(\iota))$
can only be $\delta(X)$-critical if 
$|\{ \sigma_1(i_1),\ldots, \sigma_k(i_k)\}| = k$.
More precisely, $\iota$ is a $\delta(X)$-critical multi-index if and only
if the condition
\begin{equation*}
\forall j \in [k]\, \forall l\in [k]\setminus \{j\}\,:\, 
x^{\sigma_j(i_j)}_l < \Phi(\iota)_l = x^{\sigma_l(i_l)}_l
\end{equation*}
holds. This is equivalent to the $k$ conditions  
\begin{equation}
\label{bedingungen}
\begin{array}{*{3}{c@{}}c@{}cc}
\sigma^{-1}_1(\sigma_2(i_2))\,,\,\, & \sigma^{-1}_1(\sigma_3(i_3))\,,\, &
\ldots\,, & \sigma^{-1}_1(\sigma_k(i_k))& < & i_1\,,\\
\sigma^{-1}_2(\sigma_1(i_1))\,,\,\, & \sigma^{-1}_2(\sigma_3(i_3))\,,\, &
\ldots\,, & \sigma^{-1}_2(\sigma_k(i_k)) & < & i_2\,,\\
\vdots & \vdots & & \vdots & \vdots& \vdots\\
\sigma^{-1}_k(\sigma_1(i_1))\,,\,\, & \sigma^{-1}_k(\sigma_2(i_2))\,,\, &
\ldots\,,\, & \sigma^{-1}_k(\sigma_{k-1}(i_{k-1}))\,\, & < & i_{k}\,.\\
\end{array}
\end{equation}
Since all the components $x^i_j$, $i\in [n]$, $j\in [d]$,
of $X$ are independent random variables, we have that for a fixed index $\nu\in [d]$
each permutation $\tau: [n]\to[n]$ is equally likely to 
fulfill $\sigma_\nu(X) = \tau$. Thus, the probability of $\iota$ 
being a $\delta(X)$-critical 
index is just the number of $k$-tuples $(\sigma_1,\ldots,\sigma_k)$ 
of permutations
fulfilling (\ref{bedingungen}), divided by $(n!)^k$.

For given pairwise distinct values $\sigma_1(i_1), \ldots, \sigma_k(i_k)$ the 
$j$-th condition of (\ref{bedingungen}) is satisfied by 
$((i_j-1)\ldots (i_j-(k-1)))(n-k)!$ permutations $\sigma_j$. 
Since all $k$ conditions in (\ref{bedingungen}) can be solved independently
of each other, it is now easy to deduce the statement of the Proposition.
\end{proof}

To state the corresponding proposition for $\bar{\delta}$, we have to 
introduce \emph{Stirling numbers of second kind} $S(d,k)$.
For $k\in \N$, $k\leq d$ let $S(d,k)$ denote the number of partitions of
$[d]$ into $k$ non-empty subsets. A closed formula for $S(d,k)$ is
\begin{equation*}
S(d,k) = \sum^k_{j=0} \frac{(-1)^j\, (k-j)^d}{j!\, (k-j)!}\,.
\end{equation*}
This formula and other useful identities can, e.g., be found in~\cite{Rio}.

\begin{proposition}
\label{Deltaquervert}
Let $X$ be a uniformly distributed random variable in $[0,1)^{nd}$.
Let $\iota = (i_1,\ldots,i_d) \in [n]^d$. Then $\iota$ is a 
$\bar{\delta}(X)$-critical multi-index with probability
\begin{equation*}
\sum^d_{k=1} S(d,k) 
\bigg( \frac{(n-k)!}{n!} \bigg)^{d-1} 
\prod^d_{j=1} \bigg( \prod^{k-1}_{\nu =1} (i_j -\nu) \bigg)\,.
\end{equation*}
\end{proposition}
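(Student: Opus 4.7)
The plan is to adapt the argument of Proposition \ref{Deltavert} to the closed-surface setting. The essential new phenomenon is that a single point of $X$ may witness the criticality of several of the surfaces $\bar{S}_j(\Phi(\iota))$ simultaneously, since the inequalities on the non-distinguished coordinates become non-strict in the closed case. This forces us to classify outcomes by the set partition of $[d]$ induced by equality of witness indices, which is where the Stirling number $S(d,k)$ enters.

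As in the earlier proof, I would restrict to the full-measure event on which $|\bar{\Gamma}_j(X)| = n+1$ and the values $x^1_j,\ldots,x^n_j$ are pairwise distinct for every $j \in [d]$, and introduce the ordering permutations $\sigma_j \in S_n$ defined by $x^{\sigma_j(1)}_j < \cdots < x^{\sigma_j(n)}_j$. Because the columns of $X$ are independent, the $\sigma_j$ are independent and uniform on $S_n$. A point $x^m$ lies on $\bar{S}_j(\Phi(\iota))$ iff $x^m_j = \phi_j(i_j)$, which almost surely forces $m = \sigma_j(i_j)$, together with $\sigma^{-1}_\ell(\sigma_j(i_j)) \leq i_\ell$ for every $\ell \neq j$. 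Setting $m_j := \sigma_j(i_j)$, the relation $j \sim j' \Leftrightarrow m_j = m_{j'}$ partitions $[d]$ into $k$ nonempty blocks $B_1,\ldots,B_k$, and the distinct values among the $m_j$ form a $k$-element subset of $[n]$. Note that whenever $\ell$ and $j$ lie in different blocks, the inequality $\sigma^{-1}_\ell(m_j) < i_\ell$ is automatically strict, since almost-sure distinctness of coordinates rules out $\sigma^{-1}_\ell(m_j) = i_\ell$ when $m_j \neq m_\ell$.

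For a fixed unordered partition $\{B_1,\ldots,B_k\}$ together with a fixed injective assignment of distinct witness values $M_1,\ldots,M_k \in [n]$ to the blocks, the independence of the $\sigma_\ell$ reduces the probability to a product over $\ell \in [d]$. For each $\ell$ with $\ell \in B_{i_0(\ell)}$, the constraints on $\sigma_\ell$ are $\sigma_\ell(i_\ell) = M_{i_0(\ell)}$ together with $\sigma^{-1}_\ell(M_i) < i_\ell$ for every $i \neq i_0(\ell)$. Counting permutations of $[n]$ meeting these constraints gives $(n-k)!\prod_{\nu=1}^{k-1}(i_\ell-\nu)$, so the probability contribution of $\sigma_\ell$ is $\frac{(n-k)!}{n!}\prod_{\nu=1}^{k-1}(i_\ell-\nu)$. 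Multiplying over $\ell \in [d]$, then summing over the $S(d,k)$ set partitions of $[d]$ into $k$ nonempty blocks and the $n!/(n-k)!$ ordered choices of distinct witness values, the factor $n!/(n-k)!$ cancels exactly one copy of $(n-k)!/n!$, yielding the $k$-th term of the claimed sum. Summing over $k \in [d]$ finishes the proof.

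The main obstacle I anticipate is conceptual rather than computational: one must recognize that in the closed-surface setting the witnesses need not be distinct and carefully track the resulting partition structure so that strict-versus-non-strict inequalities are consistently accounted for, in particular verifying that $\sigma^{-1}_\ell(M_i) < i_\ell$ holds strictly whenever $B_i$ does not contain $\ell$. Once this bookkeeping is in place, the combinatorial counting is a direct extension of the $k$-conditions argument used in Proposition \ref{Deltavert}, and the appearance of $S(d,k)$ follows from the definition of Stirling numbers of the second kind as the number of set partitions of $[d]$ into $k$ nonempty blocks.
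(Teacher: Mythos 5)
Your proposal is correct and follows essentially the same route as the paper: translate criticality of the closed surfaces into the system $\sigma_\ell^{-1}(\sigma_j(i_j))\le i_\ell$, classify outcomes by the partition of $[d]$ induced by coincidences among the witnesses $\sigma_j(i_j)$ (whence $S(d,k)$), and count the admissible permutations coordinatewise using independence. Your write-up is in fact somewhat more detailed than the paper's, which states the permutation count without the explicit block-and-witness bookkeeping, but the argument is the same.
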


\begin{proof}
We just need to consider the case where the almost-sure
event 
$|\Gamma_j(X)| = n$ for all $j\in [d]$ holds. 
For $j=1,\ldots,d$ let $\sigma_j := \sigma_j(X):[n]\to [n]$
be the permutation with
\begin{equation*}
x^{\sigma_j(1)}_j < x^{\sigma_j(2)}_j < \cdots < x^{\sigma_j(n)}_j 
< 1\,.
\end{equation*}
Then $\Phi(\iota)_j = \phi_j(i_j) = x^{\sigma_j(i_j)}_j$ for all
$j\in [d]$.
It is easy to see that the surface $\bar{S}_{j}(\Phi(\iota))$
is $\bar{\delta}(X)$-critical if 
and only
if the condition
\begin{equation*}
\forall j \in [d]\, \forall l\in [d]\setminus \{j\}\,:\, 
x^{\sigma_j(i_j)}_l \leq \Phi(\iota)_l = x^{\sigma_l(i_l)}_l 
\end{equation*}
is satisfied. This can be rewritten as  
\begin{equation}
\label{bedingungenquer}
\begin{array}{*{3}{c@{}}c@{}cc}
\sigma^{-1}_1(\sigma_2(i_2))\,,\,\, & \sigma^{-1}_1(\sigma_3(i_3))\,,\, &
\ldots\,, & \sigma^{-1}_1(\sigma_d(i_d))& \leq & i_1\,,\\
\sigma^{-1}_2(\sigma_1(i_1))\,,\,\, & \sigma^{-1}_2(\sigma_3(i_3))\,,\, &
\ldots\,, & \sigma^{-1}_2(\sigma_d(i_d)) & \leq & i_2\,,\\
\vdots & \vdots & & \vdots & \vdots& \vdots\\
\sigma^{-1}_d(\sigma_1(i_1))\,,\,\, & \sigma^{-1}_d(\sigma_2(i_2))\,,\, &
\ldots\,,\, & \sigma^{-1}_d(\sigma_{d-1}(i_{d-1}))\,\, & \leq & i_{d}\,.\\
\end{array}
\end{equation}
If $|\{\sigma_1(i_1),\ldots, \sigma_d(i_d)\}| = k$, then
there exist
\begin{equation*}
S(d,k)\, n!\,\big( (n-k)! \big)^{d-1}\, \prod^d_{j=1}\prod^{k-1}_{\nu = 1}
(i_j - \nu)
\end{equation*}
permutations satisfying (\ref{bedingungenquer}). With this observation and the 
fact that  
all components $x^i_j$, $i\in [n]$, $j\in [d]$,
of $X$ are stochastically independent, it is now easy to deduce the statement 
of Proposition \ref{Deltaquervert}. 
\end{proof}

\section{Experimental Results}\label{sec:experiments}\label{SEC:EXPERIMENTS}
We now present the experimental evaluations of the algorithms. 
We will compare our basic and improved algorithms, \basic\ and \improved,
against the algorithm of Winker and Fang~\cite{WF97},
and also give a brief comparison against the genetic algorithm of Shah~\cite{Sha10} and 
the integer programming-based algorithm of Thi\'emard~\cite{Thi01b}.

\subsection{Experimental Setup}

We divide our experiments into a thorough comparison against the algorithm of
Winker and Fang~\cite{WF97}, given in Section~\ref{subsec:expwf},
and more brief comparisons against the algorithms of Shah~\cite{Sha10}
and Thi\'emard~\cite{Thi01b}, in Sections~\ref{subsec:expshah} and~\ref{subsec:expthie}, respectively.
The algorithms \basic\ and \improved,  as well as the algorithm of Winker and
Fang~\cite{WF97}, were implemented by the authors in the C programming
language, based on the code used in~\cite{Win07}.  All implementations were
done with equal care.  In the case of Winker and Fang~\cite{WF97}, while we
did have access to the original Fortran source code (thanks to P.~Winker), due to lack
of compatible libraries we could not use it, and were forced to do a
re-implementation. 

For the integer programming-based algorithm of Thi\'emard~\cite{Thi01b}, 
E.~Thi\'emard has kindly provided us use of the source code.  This source
code was modified only as far as necessary for compatibility with newer
software versions -- specifically, we use version 11 of the CPLEX integer
programming package, while the code of Thi\'emard was written for an older
version. Finally, Shah has provided us with the application used in the
experiments of~\cite{Sha10}, but as this application is hard-coded to
use certain types of point sets only, we restrict ourselves to comparing 
with the experimental data published in~\cite{Sha10}. 
Random numbers were generated using the Gnu C library pseudorandom number
generator. 

The instances used in the experiments are described in
Section~\ref{subsec:expinstances}.  
For some instances, we are able to compute the exact discrepancy values
either using an implementation of the algorithm of Dobkin et al.~\cite{DEM96},
available from the third author's homepage\footnote{Found at \url{http://www.mpi-inf.mpg.de/~wahl/}.},
or via the integer programming-based algorithm of Thi\'emard~\cite{Thi01b}. 
These algorithms both have far better time dependency than that of Bundschuh and Zhu~\cite{BZ93},
allowing us to report exact data for larger instances than previously done.
For those instances where this is too costly, we report instead the largest
discrepancy value found by any algorithm in any trial; these imprecise values
are marked by a star.
Note that this includes some trials with other (more time-consuming) parameter
settings than those of our published experiments; thus sometimes, none of the
reported algorithms are able to match the approximate max value.

As parameter settings for the neighbourhood for \basic, we use $\ell=\lfloor \frac{n}{8} \rfloor$ if $n \geq 100$, and $\ell =\lfloor \frac{n}{4} \rfloor$ otherwise, and $mc=2$ throughout.
These settings showed reasonable performance in our experiments and in \cite{Win07}.
For \improved, these parameters are handled by the scaling described in Section~\ref{subsec:shrinking}.

Throughout, for our algorithms and for the Winker and Fang algorithm, 
we estimate the expected outcome of running 10 independent trials of 
100,000 iterations each and returning the largest discrepancy value found, and
call this the \emph{best-of-10 value}.
The estimation is computed from a basis of 100 independent trials, as suggested by Johnson~\cite{Johnson},
which strongly decreases irregularities due to randomness compared to
the method of taking 10 independent best-of-10 values and averaging these.
The comparisons are based on a fix number of iterations, rather than equal
running times, as the point of this paper is to compare the strengths of the 
involved concepts and ideas, rather than implementation tweaks. 
For this purpose, using a re-implementation rather than the original algorithm of 
Winker and Fang~\cite{WF97} has the advantage that all algorithms compared use the same code
base, compiler, and libraries, including the choice of pseudo-random number generator.
This further removes differences that are not interesting to us.

\subsection{Instances} \label{subsec:expinstances}

Our point sets are of four types: Halton sequences~\cite{Hal60}, Faure sequences~\cite{Faure},
Sobol' point sets~\cite{Sobol}, and so-called Good Lattice Points (GLP), described below.
The Halton sequences and GLPs were generated by programs written by the authors,
the Faure sequences by a program of John Burkardt~\cite{faureurl}, and the Sobol' sequences
using the data and code of Stephen Joe and Frances Kuo~\cite{JK08, sobolurl}.

Winker and Fang tested their algorithm for several point sets 
in dimension $d = 4, 5, \ldots, 11$, which were constructed in the following manner:
Let $(n, h_1,\ldots,h_d) \in \N^{d+1}$ with 
$0< h_1 < h_2 < \cdots < h_d < n$, where at least one $h_i$ is
relatively prime with $n$, i.e., their greatest common divisor 
is one. Then the points $x^1,\ldots,x^n \in [0,1)^d$ are given
by 
\begin{equation*}
x^i_j := \Big\{ \frac{2ih_j -1}{2n} \Big\}\,,\hspace{2ex}
i\in [n], j\in [d]\,,
\end{equation*}
where $\{x\}$ denotes the fractional part of $x\in\R$, i.e.,
$\{x\} = x - \lfloor x \rfloor$.  
Winker and Fang call $\{x^1,\ldots,x^n\}$  a \emph{good lattice point} (GLP) 
set\footnote{Other authors call it GLP set if it additionally exhibits a small
discrepancy.} of the generating vector $(n, h_1,\ldots,h_d)$.
It is known that for any $d\ge 2$ and $n \ge 2$ there exists a generating vector such that the
corresponding GLP set exhibits asymptotically a discrepancy of $O(\log(n)^d/n)$,
where the implicit constant of the big-O-notation depends solely on $d$, see, e.g.,
\cite[Sect.~5.2]{Nie92}.

Winker and Fang considered two series of examples to test their
algorithm: First, they (randomly) generated in each dimension 
$d=4,5,6$
ten GLP $n$-point sets, where $n\in\{50,51,\ldots,500\}$ for 
$d=4$, $n\in\{50, 51,\ldots, 250\}$ for $d=5$ and 
$n\in\{25,26,\ldots,100\}$ for $d=6$. 
For each GLP set the exact discrepancy was calculated with an
implementation of the algorithm of Bundschuh and Zhu \cite{BZ93}.

Secondly, they considered six GLP sets in dimension $d=6,7,\ldots, 11$
with cardinality between $2129$ and $4661$ points and performed
$20$ trials with $200,000$ iterations for each of the six sets.
Solving these instances exactly is mostly intractable, even 
with the algorithm of Dobkin et al.~\cite{DEM96}. 
Therefore, with the exception of the smallest instance,
it cannot be said if the results of this second series of
examples are good approximations of the real discrepancy of the GLP
sets under consideration or not.

\subsection{Comparisons against the Algorithm of Winker and Fang} \label{subsec:expwf}

\begin{table}
\footnotesize
  \centering
  \begin{tabular}{lcc }
              & Smaller instance & Larger instance \\
    Algorithm & $d=10$, $n=100$  & $d=20$, $n=1000$ \\
\hline
\basic\          & $0.78$s          & $9.34$s \\
\improved, $\delta$ only & $1.22$s & $10.94$s  \\
\improved, $\bar\delta$ only & $0.85$s & $9.11$s \\
\improved, mixed form & $1.87$s & $20.37$s \\
Winker \& Fang & $0.61$s & $7.2$s \\
\hline
  \end{tabular}
  \caption{\footnotesize Running times for the considered algorithms. All algorithms executed one trial of 100,000 iterations.  The inputs are two randomly generated point sets.}
  \label{tab:time}
\end{table}

To begin the comparisons, an indication of the running times of the algorithms
is given in Table~\ref{tab:time}.  
As can be seen from the table, \basic\ takes slightly more time than
our implementation of Winker and Fang, and \improved\ takes between
two and three times as long, mainly due to the snapping procedures.
For \improved, we report the separate times for~$\delta$ and~$\bar \delta$
optimization, as well as the time required for a mixed optimization of both
(as is done in \basic).  As can be seen, the overhead due to splitting is
negligible to non-existent. 

\begin{table}
  \centering
\footnotesize
  \begin{tabular}{lrrlrrrrrr}
         &   &    &                & \multicolumn{2}{c}{\basic\ } & \multicolumn{2}{c}{\improved\ } & \multicolumn{2}{c}{Winker \& Fang} \\
    Class&$d$&$n$ & $\disc(\cdot)$ & Hits & Best-of-10 & Hits & Best-of-10 &    Hits & Best-of-10 \\
\hline
4.145 & 4 & 145 & $0.0731$ &   99      & $0.0731$ &   100     & $0.0731$ & 7 & 0.0729 \\   4.255 & 4 & 255 & $0.1093$ &    98     & $0.1093$ &   100     & $0.1093$ & 35& 0.1093 \\   4.312 & 4 & 312 & $0.0617$ &   100     & $0.0617$ &   100     & $0.0617$ & 19& 0.0616 \\   4.376 & 4 & 376 & $0.0753$ &   30      & $0.0753$ &    79     & $0.0753$ & 0 & 0.0742 \\   4.388 & 4 & 388 & $0.1297$ &    58     & $0.1297$ &   100     & $0.1297$ & 0 & 0.1284 \\   4.443 & 4 & 443 & $0.0242$ &   38      & $0.0242$ &    90     & $0.0242$ & 0 & 0.0224 \\   4.448 & 4 & 448 & $0.0548$ &   47      & $0.0548$ &   100     & $0.0546$ & 0 & 0.0538 \\   4.451 & 4 & 451 & $0.0270$ &   0       & $0.0265$ &     8     & $0.0270$ & 0 & 0.0252 \\   4.471 & 4 & 471 & $0.0286$ &   39      & $0.0286$ &    99     & $0.0286$ & 0 & 0.0276 \\   4.487 & 4 & 487 & $0.0413$ &   24      & $0.0413$ &    93     & $0.0413$ & 0 & 0.0396 \\   \hline                                                                    
5.102 & 5 & 102 & $0.1216$ &    100    & $0.1216$ &       100 & $0.1216$ & 2 & 0.1193 \\ 5.122 & 5 & 122 & $0.0860$ &   8       & $0.0854$ &        58 & $0.0860$ & 0 & 0.0826 \\ 5.147 & 5 & 147 & $0.1456$ &    100    & $0.1456$ &       100 & $0.1456$ & 0 & 0.1418 \\ 5.153 & 5 & 153 & $0.1075$ &    100    & $0.1075$ &       100 & $0.1075$ & 1 & 0.1041 \\ 5.169 & 5 & 169 & $0.0755$ &   15      & $0.0752$ &        98 & $0.0755$ & 0 & 0.0691 \\ 5.170 & 5 & 170 & $0.0860$ &   81      & $0.0860$ &       100 & $0.0860$ & 0 & 0.0789 \\ 5.195 & 5 & 195 & $0.1574$ &     100   & $0.1574$ &       100 & $0.1574$ & 0 & 0.1533 \\ 5.203 & 5 & 203 & $0.1675$ &    100    & $0.1675$ &       100 & $0.1675$ & 0 & 0.1639 \\ 5.235 & 5 & 235 & $0.0786$ &   88      & $0.0786$ &       100 & $0.0786$ & 0 & 0.0706 \\ 5.236 & 5 & 236 & $0.0582$ &   7       & $0.0578$ &       74  & $0.0582$ & 0 & 0.0541 \\ \hline
6.28  & 6 &  28 & $0.5360$ &   100     & $0.5360$ &        33 & $0.5358$ & 100 & 0.5360\\  6.29  & 6 &  29 & $0.2532$ &   100     & $0.2532$ &       100 & $0.2532$ & 12&   0.2527\\  6.35  & 6 &  35 & $0.3431$ &   0       & $0.2859$ &        96 & $0.3431$ & 54 &  0.3431\\  6.50  & 6 &  50 & $0.3148$ &   4       & $0.3118$ &       100 & $0.3148$ & 59 &  0.3148\\  6.61  & 6 &  61 & $0.1937$ &   84      & $0.1937$ &       100 & $0.1937$ & 1 &   0.1872\\  6.73  & 6 &  73 & $0.1485$ &   28      & $0.1485$ &        95 & $0.1485$ & 0 &   0.1391\\  6.81  & 6 &  81 & $0.25$   &  24       & $0.2500$ &       100 & $0.25  $ & 0 &   0.2440\\  6.88  & 6 &  88 & $0.2658$ &   100     & $0.2658$ &       100 & $0.2658$ & 3 &   0.2608\\  6.90  & 6 &  90 & $0.1992$ &   100     & $0.1992$ &       100 & $0.1992$ & 23&   0.1990\\  6.92  & 6 &  92 & $0.1635$ &   100     & $0.1635$ &       100 & $0.1635$ & 5 &   0.1630\\  \hline
6.2129&  6 &2129& $0.0254$ &  0  & $0.0241$ &   13  & $0.0254$ & 0 & $0.0217$ \\  7.3997&7&3997& $0.0254^*$   & 0& $0.0222$ &     15  & $0.0254$ & 0 & $0.0218$ \\   8.3997&8&3997& $0.0254^*$   &0&  $0.0235$ &     10  & $0.0254$ & 0 & $0.0217$ \\  9.3997&9&3997& $0.0387^*$   &0&  $0.0366$ &      0  & $0.0375$ & 0 & $0.0354$ \\  10.4661&10&4661&$0.0272^*$  &0&  $0.0264$ &     40 &  $0.0272$ & 0 & $0.0230$ \\  11.4661&11&4661&$0.0283^*$  &0&  $0.0275$ &      3  & $0.0280$ & 0 & $0.0235$ \\  \hline
 \end{tabular}
  \caption{\footnotesize Data for GLP sets used by Winker and Fang~\cite{WF97}.
Discrepancy values marked with a star are lower bounds only (i.e., largest
discrepancy found over all executions of algorithm variants). 
All data is computed using $100$ trials of $100,000$ iterations;
reported is the average value of best-of-10 calls, and number of times (out of
100) that the optimum (or a value matching the largest known value) was found.
The data for Winker and Fang is for our re-implementation of the algorithm;
the original results for the same instances can be found in~\cite{WF97}.}
  \label{tab:glpwf}
\end{table}

The parameter settings for our implementation of the algorithm of Winker and Fang are as follows.
Since our experiments did not reveal a strong influence of the choice of $\alpha$ on the quality of the algorithm, we fix $\alpha:=0.995$ for our experiments.
Winker and Fang do not explicitly give a rule how one should choose $k$ and $mc$. 
For the small-dimensional data (Table~\ref{tab:glpwf}), we use the settings of~\cite{WF97}. 
For the other tests, we use~$mc=3$ if~$d\leq 12$ and~$mc=4$ otherwise, and~$k=41$ if~$n\leq 500$ and~$k=301$ otherwise.
This seems to be in line with the choices of Winker and Fang for the sizes used.

Table~\ref{tab:glpwf} shows the data for the GLP sets used by Winker and Fang in~\cite{WF97}.
Although the last group of point sets are quite large, note that this data is mostly of modest dimension.
As can be seen, for these sizes, all algorithms behave reasonably, with both
of our algorithms generally outperforming our implementation of Winker and Fang, 
and with \improved\ showing much higher precision than \basic.

We note that our re-implementation of the Winker and Fang algorithm gives
notably worse results than what was reported in~\cite{WF97} for the same
instances.  For the larger instances (i.e., with thousands of points), 200,000 
iterations are used in~\cite{WF97} while we use 100,000 iterations throughout, 
but there is also a clear difference for the smaller settings.
Adjusting the parameters of our implementation to match those used
in~\cite{WF97} has not been found to compensate for this.  After significant
experimentation, the best hypothesis we can provide is that there might be a
difference in the behavior of the pseudo-random number generators used (in
particular, as~\cite{WF97} uses a random number library we do not have access
to).  Still, even compared to the results reported in~\cite{WF97}, our
algorithms, and \improved\ in particular, still fare well.

\begin{table}
  \centering
\footnotesize
  \begin{tabular}{rrrcrrrrll}
         &   &    & $\disc(\cdot)$  & \multicolumn{2}{c}{\basic\ } & \multicolumn{2}{c}{\improved\ } & \multicolumn{2}{c}{Winker \& Fang } \\
    Name &$d$&$n$ &           found& Hits & Best-of-10 & Hits & Best-of-10 & Hits & Best-of-10 \\
\hline
Sobol' & 7 &  256& $0.0883$ & 1 & $0.0804$ & 78 & $0.0883$ & 0  & $0.0819$ \\
Sobol' & 7 &  512& $0.0452$ & 1 & $0.0440$ & 17 & $0.0451$ & 0  & $0.0395$ \\
Sobol' & 8 &  128& $0.1202$ & 0 & $0.1198$ & 98 & $0.1202$ & 0  & $0.1102$ \\
Sobol' & 9 &  128& $0.1372$ & 8 & $0.1367$ &100 & $0.1372$ & 0  & $0.1254$ \\
Sobol' & 10&  128& $0.1787$ &36 & $0.1787$ &100 & $0.1787$ & 0  & $0.1606$ \\
Sobol' & 11&  128& $0.1811$ &14 & $0.1811$ & 97 & $0.1811$ & 0  & $0.1563$ \\
Sobol' & 12&  128&$0.1885$& 1 & $0.1873$ & 82 & $0.1885$ & 0  & $0.1689$ \\
Sobol' & 12&  256&$0.1110^*$& 2 & $0.1108$ & 41 & $0.1110$ & 0  & $0.0908$ \\
\hline
Faure & 7 & 343 & $0.1298$ & 21& $0.1297$ & 100& $0.1298$ & 0  & $0.1143$ \\
Faure & 8 & 121 & $0.1702$ & 99& $0.1702$ & 100& $0.1702$ & 0  & $0.1573$ \\
Faure & 9 & 121 & $0.2121$ & 98& $0.2121$ & 100& $0.2121$ & 0  & $0.1959$ \\
Faure & 10& 121 & $0.2574$ & 95& $0.2574$ & 100& $0.2574$ & 0  & $0.2356$ \\
Faure & 11& 121 & $0.3010$ &100& $0.3010$ & 100& $0.3010$ & 0  & $0.2632$ \\
Faure & 12& 169 &$0.2718$& 73& $0.2718$ & 100& $0.2718$ & 0  & $0.1708$ \\
\hline
GLP   & 6 & 343 & $0.0870$ & 1 & $0.0869$ & 36 & $0.0870$ & 0  & $0.0778$ \\
GLP   & 7 & 343 & $0.0888$ & 3 & $0.0883$ & 28 & $0.0888$ & 0  & $0.0791$ \\
GLP   & 8 & 113 & $0.1422$ & 6 & $0.1399$ & 95 & $0.1422$ & 0  & $0.1303$ \\
GLP   & 9 & 113 & $0.1641$ &98 & $0.1641$ &100 & $0.1641$ & 0  & $0.1490$ \\
GLP   & 10& 113 & $0.1871$ & 1 & $0.1862$ & 94 & $0.1871$ & 0  & $0.1744$ \\
\hline
Sobol' &20&128   &$0.2616^*$& 0 & $0.2576$ & 51 & $0.2616$ & 0  & $0.0497$ \\
Sobol' &20&256   &$0.1856^*$&13 & $0.1854$ & 49 & $0.1856$ & 0  & $0.0980$ \\
Sobol' &20&512 & $0.1336^*$ & 0 & $0.1080$ & 86 & $0.1336$ & 0  & $0.0635$ \\
Sobol' &20&1024 &$0.1349^*$ & 0 & $0.0951$ &  0 & $0.1330$ & 0  & $0.0560$ \\
Sobol' &20&2048& $0.0724^*$ & 0 & $0.0465$ &  0 & $0.0505$ & 0  & $0.0370$ \\
Faure &20&529 & $0.2615^*$ & 0 & $0.2587$ & 98 & $0.2615$ & 0  & $0.0275$ \\
Faure &20&1500& $0.0740^*$ & 0 & $0.0733$ & 14 & $0.0740$ & 0  & $0.0347$ \\
GLP   &20&149 & $0.2581^*$ & 1 & $0.2548$ & 65 & $0.2581$ & 0  & $0.0837$ \\
GLP   &20&227 & $0.1902^*$ & 0 & $0.1897$ &  1 & $0.1899$ & 0  & $0.0601$ \\
GLP   &20&457 & $0.1298^*$ & 0 & $0.1220$ &  3 & $0.1272$ & 0  & $0.0519$ \\
GLP   &20&911 & $0.1013^*$ & 0 & $0.0975$ &  8 & $0.1013$ & 0  & $0.0315$ \\
GLP   &20&1619& $0.0844^*$ & 0 & $0.0809$ &  2 & $0.0844$ & 0  & $0.0299$ \\
\hline
Sobol' &50&2000& $0.1030^*$ & 0 & $0.0952$ &  0 & $0.1024$ & 0  & $0.0005$ \\
Sobol' &50&4000& $0.0677^*$ & 0 & $0.0597$ &  0 & $0.0665$ & 0  & $0.00025$ \\
Faure &50&2000& $0.3112^*$ & 0 & $0.2868$ &100 & $0.3112$ & 0  & $0.0123$ \\
Faure &50&4000& $0.1979^*$ & 0 & $0.1912$ & 0  & $0.1978$ & 0  & $0.0059$ \\
GLP   &50&2000& $0.1465^*$ & 0 & $0.1317$ & 0  & $0.1450$ & 0  & $0.0005$ \\
GLP   &50&4000& $0.1205^*$ & 0 & $0.1053$ & 0  & $0.1201$ & 0  & $0.0003$ \\
\hline
  \end{tabular}
  \caption{\footnotesize New instance comparisons. 
Discrepancy values marked with a star are lower bounds only (i.e., largest
discrepancy found over all executions of algorithm variants). 
All data is computed using $100$ trials of $100,000$ iterations;
reported is the average value of best-of-10 calls, and number of times (out of
100) that the optimum (or a value matching the largest known value) was found.}
  \label{tab:newlarge}
\end{table}

Table~\ref{tab:newlarge} shows the new data, for larger-scale instances. 
A few new trends are noticeable, in particular for the higher-dimensional
data.  Here, the algorithm of Winker and Fang seems to deteriorate, 
and there is also a larger difference emerging between \basic\ and \improved,
in particular for the Sobol' sets.  However, as can be seen for the 2048-point,
20-dimensional Sobol' set, it does happen that the lower bound is quite
imprecise.  (The value of~$0.0724$ for this point set was discovered only a
handful of times over nearly 5000 trials of algorithm variants and settings.)

The highest-dimensional sets ($d=50$) illustrate the deterioration of Winker
and Fang with increasing dimension; for many of the settings, the largest
error this algorithm finds is exactly~$1/n$ (due to the zero-volume box
containing the origin with one point).

\subsection{Comparisons with the Algorithm by Shah} \label{subsec:expshah}

\begin{table}
  \centering
\footnotesize
  \begin{tabular}{lrrlrrrrll}
         &   &    &                & \multicolumn{2}{c}{\basic\ } & \multicolumn{2}{c}{\improved\ } & \multicolumn{2}{c}{Shah} \\
    Class&$d$&$n$ & $\disc(\cdot)$ & Hits & Best-of-10 & Hits & Best-of-10 & Hits & Best Found \\
\hline
Halton & 5 & 50 & $0.1886$ & 100 & $0.1886$ & 100 & $0.1886$ & 81 & $0.1886$ \\
Halton & 7 & 50 & $0.2678$ & 100   & $0.2678$ & 100 & $0.2678$ & 22 & $0.2678$ \\
Halton & 7 & 100& $0.1714$ & 9 & $0.1710$ & 100 & $0.1714$ & 13 & $0.1714$ \\
Halton & 7 &1000& $0.0430$ & 0 & $0.0424$& 81 & $0.0430$ & $\phantom{0}8^{(1)}$ & $0.0430^{(1)}$ \\
Faure & 10 & 50 & $0.4680$ & 100 & $0.4680$ & 100 & $0.4680$ & 97 & $0.4680$ \\
Faure & 10 & 100& $0.2483$ & 52 & $0.2483$ & 100 & $0.2483$ &  28 & $0.2483$ \\ 
Faure & 10 & 500& $0.0717^*$& 2 & $0.0701$ &100& $0.0717$ & $\phantom{0}0^{(1)}$ & $0.0689^{(1)}$ \\
\hline
  \end{tabular}
  \caption{\footnotesize Comparison against point sets used by Shah. Reporting average value of best-of-10 calls, and number of times (out of 100) that the optimum was found; for Shah, reporting highest value found, and number of times (out of 100) this value was produced.
The discrepancy value marked with a star is lower bound only (i.e., largest value found by any algorithm). Values marked (1) are recomputed using the same settings as in~\cite{Sha10}.}
  \label{tab:shah}
\end{table}

Table~\ref{tab:shah} lists the point sets used by Shah~\cite{Sha10}.  
The Faure sets here are somewhat nonstandard in that they exclude the origin,
i.e., they consist of points 2 through~$n+1$ of the Faure sequence,
where the order of the points is as produced by the program of Burkhardt~\cite{faureurl}.

Some
very small point sets were omitted, as  
every reported algorithm would find the optimum every time.
For all but one of the point sets, the exact discrepancy could be computed;
the remaining instance is the first 500 points of the 10-dimensional Faure sequence.

Most of the sets seem too easy to really test the algorithms, i.e., all
variants frequently find essentially optimal points.  The one exception is the
last item, which shows a clear advantage for our algorithms.  We also find
(again) that \improved\ has a better precision than the other algorithms. 

\subsection{Comparisons with the Algorithms by Thi\'emard} \label{subsec:expthie}

\begin{table}
  \centering
\footnotesize
  \begin{tabular}{ccccccc}
\hline
         &  \multicolumn{2}{c}{\improved\ }& \multicolumn{2}{c}{Thi\'emard: Initial} & Same time,& Same result,\\
Instance & Time & Result                   & Time & Result                         &   result    & time \\
\hline
Faure-12-169& 25s & $0.2718$               & 1s   & $0.2718$                       & $0.2718$  & 1s \\
Sobol'-12-128& 20s & $0.1885$               & 1s   & $0.1463$                       & $0.1463$  & 453s ($7.6$m) \\
Sobol'-12-256& 35s & $0.1110$               & 3s   & $0.0872$                       & $0.0873$  & $1.6$ days \\
Faure-20-1500&280s (4.7m) & $0.0740$       & 422s (7m) & $0.0732$                  & None      & $>4$ days \\
GLP-20-1619  & 310s (5.2m) & $0.0844$      & 564s (9.4m) & $0.0572$                       & None      & $>5$ days \\
Sobol'-50-4000& 2600s (42m) & $0.0665$      & 32751s (9h) & $0.0743$                & None      & 32751s (9h) \\
GLP-50-4000  & 2500s (42m) & $0.1201$      & 31046s (8.6h) & $0.0301$              & None      & $>5$ days \\
\hline \\
  \end{tabular}
  \caption{Comparison against the integer programming-based algorithm of Thi\'emard~\cite{Thi01b}.
The values for \improved\ represent the time and average result of a best-of-10 computation
with $100,000$ iterations per trial. 
The middle pair of columns give the time required for~\cite{Thi01b} to return a first output,
and the value of this output; the last two columns report the lower bound reached by~\cite{Thi01b}
if allocated the same time that \improved\ needs for completion, and the time required by~\cite{Thi01b}
to match the result of \improved.}
  \label{tab:cmpthiemard}
\end{table}

Finally, we give a quick comparison against the integer programming-based
algorithm of Thi\'emard~\cite{Thi01b}.  Since~\cite{Thi01b} has the feature
that running it for a longer time produces gradually stronger bounds, we
report three different checkpoint values; see Table~\ref{tab:cmpthiemard} for details.
The results are somewhat irregular; however,~\cite{Thi01b} may require a lot
of time to report a first value, and frequently will not improve significantly
on this initial lower bound except after very large amounts of computation
time (for example, for the 12-dimensional, 256-point Sobol' set, the value~$0.0872$ 
is discovered in seconds, while the first real improvement takes over an hour
to produce).

Thi\'emard also constructed a second algorithm for discrepancy estimation,
based on delta-covers~\cite{Thi01a}; this is freely downloadable from Thi\'emard's homepage.
Its prime feature is that it provides upper bounds with a non-trivial running time guarantee.
The lower bounds that it produces are not as helpful as the upper bounds, e.g.,
it was reported in~\cite{DGW10} and~\cite{Sha10} that the lower
bounds from the preliminary version of \basic~\cite{Win07} and the genetic
algorithm of Shah~\cite{Sha10} are better. Thus we omit this kind of comparison here.

\section{Conclusion}\label{sec:outlook}Our numerical experiments clearly indicate that the improvements made from the algorithm of 
Winker and Fang in \basic\ and \improved\ greatly increases the quality of the lower bounds,
in particular for the difficult higher-dimensional problem instances.
Nevertheless, we do not fully understand the behavior of different algorithm variants with regards to \emph{snapping}.
In particular, one might well have expected the variant described in Section~\ref{sssec:variant} to do better than the one of Section~\ref{subsec:snapping}
that we currently use.  It is still possible that a judicious application of the ``snap-move'' variant of Section~\ref{sssec:variant},
perhaps only in certain situations, can improve the behavior further.

Still, all in all, we conclude that the algorithms \basic\ and \improved\ presented in the current work represent significant improvements
over previous lower-bound heuristics for computing the star discrepancy, and to the best of our knowledge, make up the best performing
star discrepancy estimation algorithms available.

\section*{Acknowledgments}
We gratefully acknowledge Manan Shah, Eric Thi\'emard, and Peter Winker for providing us 
with source code and implementations of the applications used in their experiments 
(in~\cite{Sha10},~\cite{Thi01b}, and~\cite{WF97}, respectively), 
and in general for helpful comments. 

Michael Gnewuch was supported by the German Research Foundation (DFG) under grants GN 91/3-1 and GN 91/4-1. Part of his work was done while he was at the Max Planck Institute for Mathematics in the Sciences in Leipzig and at Columbia University in the City of New York.

Magnus Wahlstr\"om is supported by the DFG via its priority program "SPP 1307: Algorithm Engineering'' under grant DO 749/4-1.

Carola Winzen is a recipient of the Google Europe Fellowship in Randomized Algorithms, and this research is supported in part by this Google Fellowship.

\bibliographystyle{amsalpha}

\providecommand{\bysame}{\leavevmode\hbox to3em{\hrulefill}\thinspace}
\providecommand{\MR}{\relax\ifhmode\unskip\space\fi MR }
\providecommand{\MRhref}[2]{  \href{http://www.ams.org/mathscinet-getitem?mr=#1}{#2}
}
\providecommand{\href}[2]{#2}

\newpage
\appendix
\label{SEC:APPENDIX}

\newpage
\section{Calculation of $\lambda^d(\A(z))$ and $\pi^d(\A(z))$}
\label{APPE}

\begin{lemma}
\label{lemmaLB1}
Let $\varepsilon \in (0,1]$, and let $z\in[0,1]^d$ with $V_z
\geq \varepsilon$. Then
\begin{equation}
\label{voila}
\lambda^d(\A(z)) = V_z - (V_z - \varepsilon) \sum^{d-1}_{k=0}
\frac{(-\ln(1-\varepsilon/V_z))^k}{k!} \,.
\end{equation}
\end{lemma}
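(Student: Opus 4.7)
My plan is to reduce the integral to a standard computation about products of uniforms.

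First, I would rescale to remove the dependence on $z$. Using the change of variables $x_j = z_j y_j$ for $j=1,\ldots,d$ (with Jacobian $V_z$), the condition $x \leq z$ becomes $y \in [0,1]^d$, and the condition $V_z - V_x \leq \varepsilon$ becomes $\prod_j y_j \geq 1 - \varepsilon/V_z$. Writing $c := 1 - \varepsilon/V_z \in [0,1)$ (well-defined since $V_z \ge \varepsilon$), this gives
\begin{equation*}
\lambda^d(\A(z)) = V_z \cdot \lambda^d\bigl(\{y \in [0,1]^d : \tfrac{}{}\textstyle\prod_j y_j \ge c\}\bigr) = V_z\bigl(1 - G_d(c)\bigr),
\end{equation*}
where $G_d(c) := \lambda^d(\{y\in[0,1]^d : \prod_j y_j \le c\})$.

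Second, I would compute $G_d(c)$ explicitly. The cleanest route is probabilistic: if $Y_1,\ldots,Y_d$ are i.i.d.\ uniform on $[0,1]$, then $-\ln Y_j$ are i.i.d.\ exponential with rate $1$, so $S_d := -\sum_{j=1}^d \ln Y_j$ has a Gamma$(d,1)$ distribution with density $t^{d-1}e^{-t}/(d-1)!$ on $[0,\infty)$. Then
\begin{equation*}
G_d(c) = \Prob\bigl(\textstyle\prod_j Y_j \le c\bigr) = \Prob(S_d \ge -\ln c) = \int_{-\ln c}^{\infty} \frac{t^{d-1}e^{-t}}{(d-1)!}\,dt.
\end{equation*}
The standard tail identity for the Gamma distribution (which follows from $d-1$ iterated integrations by parts, or from the Poisson/Gamma duality) gives $\int_{\tau}^\infty \tfrac{t^{d-1}e^{-t}}{(d-1)!}dt = e^{-\tau}\sum_{k=0}^{d-1} \tfrac{\tau^k}{k!}$. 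Applying this with $\tau = -\ln c$ yields
\begin{equation*}
G_d(c) = c\sum_{k=0}^{d-1}\frac{(-\ln c)^k}{k!}.
\end{equation*}
(Alternatively, one can prove this by induction on $d$, conditioning on $y_d$: $G_d(c) = c + \int_c^1 G_{d-1}(c/t)\,dt$, and verify the closed form survives the recursion. Either derivation is routine.)

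Third, I would substitute back. Combining the two displays and using $V_z \cdot c = V_z - \varepsilon$:
\begin{equation*}
\lambda^d(\A(z)) = V_z - V_z\cdot c \sum_{k=0}^{d-1}\frac{(-\ln c)^k}{k!} = V_z - (V_z-\varepsilon)\sum_{k=0}^{d-1}\frac{(-\ln(1-\varepsilon/V_z))^k}{k!},
\end{equation*}
which is exactly \eqref{voila}. There is no genuine obstacle here; the only minor care is the domain issue $c \in [0,1)$ (the boundary case $\varepsilon = V_z$, i.e.\ $c = 0$, being a trivial limit giving $\lambda^d(\A(z)) = V_z$, consistent with $\lim_{c\to 0^+} c(-\ln c)^k = 0$), and checking the sign conventions in the Gamma tail formula.
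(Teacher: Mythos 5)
Your proof is correct, but it takes a genuinely different route from the paper's. The paper writes $\lambda^d(\A(z))$ directly as an iterated integral $\int_{\alpha_1}^{z_1}\cdots\int_{\alpha_d}^{z_d}d\zeta_d\cdots d\zeta_1$ with explicit lower limits and establishes \eqref{voila} by induction on $d$, reducing the inner $(d-1)$-fold integral to the induction hypothesis with a shifted parameter $\tilde{\varepsilon}$. You instead normalize by $x_j=z_jy_j$ (legitimate since $V_z\ge\varepsilon>0$ forces every $z_j>0$), which isolates the single relevant parameter $c=1-\varepsilon/V_z$ and reduces everything to the volume of $\{\prod_j y_j\ge c\}$ in the unit cube; this you evaluate via the classical fact that $-\ln$ of a product of $d$ i.i.d.\ uniforms is Gamma$(d,1)$, together with the Erlang tail identity. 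The two computations are of course the same integration by parts in disguise -- your suggested alternative recursion $G_d(c)=c+\int_c^1 G_{d-1}(c/t)\,dt$ is essentially the paper's inductive step after rescaling -- but your version is shorter, makes the structure of the answer transparent, and outsources the only real calculation to a textbook identity. What the paper's more pedestrian derivation buys is self-containedness (no probabilistic input) and a form that sits naturally next to the subsequent Lemma~\ref{B1}, where the derivative $\partial_r\lambda^d(A_r(z))=(-\ln(1-r/V_z))^{d-1}/(d-1)!$ is extracted from \eqref{voila} to handle the measure $\pi^d$; your argument delivers that derivative just as readily, since it is exactly the Gamma density appearing in your tail integral. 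Your handling of the boundary case $c=0$ as a limit is a small point the paper glosses over.
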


\begin{proof}
Let $V_z\geq \varepsilon$. Then we have
\begin{equation*}
\lambda^d(\A(z)) = \int^{z_1}_{\alpha_1}...\int^{z_d}_{\alpha_d}\,
d\zeta_d...\,d\zeta_1\,,
\end{equation*}
where 
\begin{equation*}
\alpha_1 = \frac{V_z -
\varepsilon}{z_{2} z_3...z_d}\,,\;
\alpha_2 = \frac{V_z -
\varepsilon}{\zeta_1 z_{3}...z_d}\, ,...,
\;\alpha_d = \frac{V_z -
\varepsilon}{\zeta_1 \zeta_2 ...\zeta_{d-1}}\,.
\end{equation*}
We prove formula (\ref{voila}) by induction over the dimension $d$. If $d=1$, then 
clearly $\lambda(A_\varepsilon(z)) = \varepsilon$.
Let now $d\geq 2$. We denote by $\tilde{z}$ the $(d-1)$-dimensional
vector $(z_2,...,z_d)$ and by $\tilde{\varepsilon}$ the term
$(\varepsilon + (\zeta_1 - z_1)V_{\tilde{z}})/\zeta_1$.
Furthermore we define for $i\in [d-1]$ the lower integration limit
$\tilde{\alpha}_i = (V_{\tilde{z}} -
\tilde{\varepsilon})/(\zeta_2...\zeta_{i}\,\tilde{z}_{i+1}...\tilde{z}_{d-1})$.
Note that $\tilde{\alpha}_i = \alpha_{i+1}$.
Then, by our induction hypothesis,
\begin{equation*}
\begin{split}
\lambda^d(\A(z)) &= \int^{z_1}_{\alpha_1}
\int^{\tilde{z}_1}_{\tilde{\alpha}_1}... 
\int^{\tilde{z}_{d-1}}_{\tilde{\alpha}_{d-1}}\,d\zeta_{d}...\,
d\zeta_{2}\,d\zeta_1\\
&= \int^{z_1}_{\alpha_1} \Bigg( V_{\tilde{z}} - (V_{\tilde{z}} -
\tilde{\varepsilon})\sum^{d-2}_{k=0} 
\frac{(-\ln(1 - \tilde{\varepsilon}/V_{\tilde{z}}))^k}{k!} \Bigg)
\,d\zeta_1\\
&= V_z - (V_z - \varepsilon) - (V_z - \varepsilon) \Bigg[ \sum^{d-1}_{k=1}
\frac{1}{k!} \ln \Big( \frac{V_{\tilde{z}}}{V_z - \varepsilon}\zeta_1 
\Big)^k \Bigg]^{z_1}_{\zeta_1 = \alpha_1}\\
&=  V_z - (V_z - \varepsilon) \sum^{d-1}_{k=0}
\frac{(-\ln(1-\varepsilon/V_z))^k}{k!} \,.
\end{split}
\end{equation*}
\end{proof}

\begin{proposition}
\label{lemmaLB3}
Let $d\geq 2$. For $z\in [0,1]^d$ with 
$V_z > \varepsilon$, we obtain
\begin{equation*}
\lambda^d(\A(z)) = \frac{1}{d!}\frac{\varepsilon^d}{V^{d-1}_z}
\sum^\infty_{k=0} b_k(d) \Big( \frac{\varepsilon}{V_z} \Big)^k
\end{equation*}
with positive coefficients
\begin{equation*}
b_k(2) = \frac{2}{(k+1)(k+2)}\,,\hspace{2ex}
b_k(3) = \frac{6}{(k+2)(k+3)} \sum^k_{\nu=0} \frac{1}{\nu+1}
\end{equation*}
and
\begin{equation*}
b_k(d) = \frac{d!}{(k+d-1)(k+d)}\sum^k_{k_1 =
0}...\sum^{k_{d-3}}_{k_{d-2}=0}\, \prod^{d-2}_{j=1}\,\frac{1}{k_j + d -
  j-1}\hspace{2ex}\text{for $d\geq 4$.}
\end{equation*}
The power series converges for each $\epsilon > 0$ uniformly and absolutely
for all $V_z \in [\varepsilon + \epsilon, 1]$.
Furthermore, we have $b_0(d) = 1$, $b_1(d) = d(d-1)/2(d+1)$, and for
all $k$ the inequality 
$b_k(d) \leq d^k/2^{k-1}$ is satisfied.
\end{proposition}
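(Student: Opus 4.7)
The approach is to convert Lemma~\ref{lemmaLB1} into an integral representation and then expand the integrand as a power series in $t := \varepsilon/V_z$. Setting $u := -\ln(1-t)$ and noting $\sum_{k\geq 0}u^k/k! = e^u = 1/(1-t)$, the formula of Lemma~\ref{lemmaLB1} rewrites as
\begin{equation*}
\lambda^d(\A(z)) = V_z(1-t)\sum_{k\geq d}\frac{u^k}{k!},
\end{equation*}
which makes positivity transparent. Differentiating in $t$ (using $u' = 1/(1-t)$) the sum $\sum_{k=0}^{d-1} u^k/k!$ produces a telescoping cancellation that, after combining with the factor $(1-t)$, yields the clean identity $(d/dt)[\lambda^d(\A(z))/V_z] = (-\ln(1-t))^{d-1}/(d-1)!$. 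Integrating from $0$ and using that $\lambda^d(\A(z))$ vanishes at $t=0$ gives $\lambda^d(\A(z)) = \frac{V_z}{(d-1)!}\int_0^t(-\ln(1-s))^{d-1}\,ds$.

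Next, expand $-\ln(1-s) = s\sum_{j\geq 0}s^j/(j+1)$ and raise to the $(d-1)$-st power via Cauchy multiplication, so that $(-\ln(1-s))^{d-1} = s^{d-1}\sum_{k\geq 0}c_k(d-1)s^k$ with $c_k(m) := \sum_{j_1+\cdots+j_m=k}\prod_{i=1}^m (j_i+1)^{-1}$, a positive convolution. Termwise integration then yields the claimed expansion with $b_k(d) = d\,c_k(d-1)/(k+d)$. Positivity is immediate; $c_0(d-1) = 1$ gives $b_0(d) = 1$; and $c_1(d-1) = (d-1)/2$ gives $b_1(d) = d(d-1)/(2(d+1))$.

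The closed forms for $d=2$ and $d=3$ drop out easily: $c_k(1) = 1/(k+1)$ gives $b_k(2) = 2/((k+1)(k+2))$, while the partial fraction $\frac{1}{(j+1)(k-j+1)} = \frac{1}{k+2}\bigl(\frac{1}{j+1} + \frac{1}{k-j+1}\bigr)$ reduces $c_k(2)$ to $2H_{k+1}/(k+2)$, yielding the formula for $b_k(3)$. For $d\geq 4$ the plan is to identify $c_k(m)$ with unsigned Stirling numbers of the first kind: comparing $t^m \sum_k c_k(m) t^k = (-\ln(1-t))^m$ with the exponential generating function $(-\ln(1-t))^m/m! = \sum_n \left[{n\atop m}\right] t^n/n!$ yields $c_k(m) = m!\,\left[{k+m\atop m}\right]/(k+m)!$. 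The classical identity $\left[{n\atop j}\right] = (n-1)!\,e_{j-1}(1,1/2,\ldots,1/(n-1))$, read off from $x(x+1)\cdots(x+n-1) = \sum_j \left[{n\atop j}\right] x^j$, then gives
\begin{equation*}
c_k(m) = \frac{m!}{k+m}\,e_{m-1}\bigl(1, 1/2, \ldots, 1/(k+m-1)\bigr).
\end{equation*}
Writing $e_{m-1}$ as $\sum_{1\leq i_1 < \cdots < i_{m-1} \leq k+m-1}\prod_\ell i_\ell^{-1}$ and applying the reindexing $i_{m-j} = k_j + (m-j)$ converts the strictly increasing condition into the weakly decreasing chain $k\geq k_1\geq\cdots\geq k_{m-1}\geq 0$ with $1/i_{m-j} = 1/(k_j+m-j)$; substituting $m = d-1$ produces exactly the nested sum stated in the proposition.

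For the bound $b_k(d)\leq d^k/2^{k-1}$, the elementary estimate $-\ln(1-s)\leq s/(1-s)$ (obtained by comparing the integrand $1/(1-u)$ with its supremum on $[0,s]$) gives $((-\ln(1-s))/s)^{d-1}\leq (1-s)^{-(d-1)}$, and comparing power series coefficients by positivity yields $c_k(d-1) \leq \binom{k+d-2}{d-2}$; combined with direct verification at $k=0,1$ and a routine binomial estimate, this establishes the stated bound. Uniform and absolute convergence on $V_z\in[\varepsilon+\epsilon,1]$ is then immediate: $t = \varepsilon/V_z \leq \varepsilon/(\varepsilon+\epsilon) < 1$, and the Stirling-number representation actually gives the sharper decay $b_k(d) = O((\log k)^{d-2}/k^2)$, so the series converges absolutely even up to $t=1$. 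The main obstacle is the combinatorial identification in paragraph three---translating the straightforward multinomial expression for $c_k(m)$ into the nested weakly decreasing sum requires the Stirling-number machinery and a careful reindexing---whereas the analytic steps (integral representation, termwise integration, and coefficient bound) are routine once set up.
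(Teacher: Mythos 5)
Your proof is correct, and while it shares the paper's starting point (the derivative identity $\partial_t\bigl[\lambda^d(\A(z))/V_z\bigr]=(-\ln(1-t))^{d-1}/(d-1)!$ extracted from Lemma~\ref{lemmaLB1} and the subsequent integration/expansion of $(-\ln(1-s))^{d-1}$), it diverges at the key combinatorial step. The paper proves the coefficient formula by induction on $d$: writing $\partial_x(-\ln(1-x))^{d-1}/(d-1)!=(1-x)^{-1}\cdot(-\ln(1-x))^{d-2}/(d-2)!$, multiplying the two power series, and verifying the recursion $d\sum_{\mu=0}^{k}(\mu+d-1)b_\mu(d-1)=(k+d)(k+d-1)b_k(d)$ directly against the nested-sum form. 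You instead compute the coefficients in one shot as the multinomial convolution $c_k(d-1)=\sum_{j_1+\cdots+j_{d-1}=k}\prod(j_i+1)^{-1}$ and then identify this with the nested weakly-decreasing sum via unsigned Stirling numbers of the first kind and a reindexing of elementary symmetric polynomials. Your route requires more combinatorial machinery but buys two things the paper does not make explicit: a transparent closed form $b_k(d)=d\,c_k(d-1)/(k+d)$ from which positivity, $b_0$, $b_1$, and the $d=2,3$ cases fall out immediately, and a genuine decay estimate $b_k(d)=O((\log k)^{d-2}/k^2)$ giving absolute convergence up to $t=1$ (the paper merely asserts the convergence). Your final coefficient bound is the same in substance as the paper's ($c_k(d-1)\le\binom{k+d-2}{d-2}$ followed by the binomial estimate $\binom{k+d-2}{k}\le(d-1)(d/2)^{k-1}$), though you should justify $c_k(d-1)\le\binom{k+d-2}{d-2}$ by termwise domination of the factor series ($1/(j+1)\le 1$, which propagates through Cauchy products) rather than by the functional inequality $-\ln(1-s)\le s/(1-s)$: a pointwise inequality between two power series with positive coefficients does not by itself imply a coefficient-wise inequality. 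This is a one-line repair, not a gap.
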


\begin{proof}
To prove the  power series expansion, we consider the function
\begin{equation*}
R(x,d) = 1 - (1-x)\sum^{d-1}_{k=0} \frac{(-\ln(1-x))^k}{k!}\hspace{2ex}
\text{for $x\in [0,1)$.}
\end{equation*}
Due to Lemma \ref{lemmaLB1} we have $\lambda^d(A_\varepsilon(z)) 
= V_z R(\varepsilon/V_z, d)$.
Since $\partial_x R(x,d) = (-\ln(1-x))^{d-1}/(d-1)!$, it suffices to
prove the following statement by induction over $d$:
\begin{equation}
\label{lnhochd}
\frac{(-\ln(1-x))^{d-1}}{(d-1)!} = \frac{1}{d!}\sum^\infty_{k=0}
(k+d) b_k(d)x^{k+d-1}\,,
\end{equation}
where the power series converges for each $\epsilon > 0$ uniformly and 
absolutely on $[0,1-\epsilon]$.
Let first $d=2$. Then
\begin{equation*}
-\ln(1-x) = \sum^\infty_{k=1}\frac{x^k}{k} 
= \frac{1}{2!}\sum^\infty_{k=0}(k+2)b_k(2)x^{k+1}\,,
\end{equation*}
and the required convergence of the power series is obviously given.
Now let $d\geq 3$. Our induction hypothesis yields
\begin{equation*}
\begin{split}
\partial_x \frac{(-\ln(1-x))^{d-1}}{(d-1)!} &= \frac{1}{1-x}
\frac{(-\ln(1-x))^{d-2}}{(d-2)!}\\
&= \bigg( \sum^\infty_{\nu = 0} x^\nu \bigg) \bigg(\frac{1}{(d-1)!}
\sum^\infty_{\mu = 0}(\mu + d -1) b_\mu(d-1)x^{\mu+d-2} \bigg)\\
&= \frac{1}{d!}\sum^\infty_{k=0} \bigg( d\sum^k_{\mu = 0} 
(\mu+d-1)b_\mu(d-1) \bigg) x^{k+d-2}\,,
\end{split}
\end{equation*}
where the last power series converges as claimed above.
Now
\begin{equation*}
\begin{split}
d\sum^k_{\mu=0} (\mu+d-1)b_\mu(d-1) &= \sum^k_{\mu=0} 
\frac{d!}{(\mu+d-2)}\sum^\mu_{\mu_1=0}...\sum^{\mu_{d-4}}_{\mu_{d-3}=0}
\prod^{d-3}_{j=2} \frac{1}{\mu_j+d-2-j}\\
&= d! \sum^k_{\nu_1 = 0}\sum^{\nu_1}_{\nu_2 = 0}...
\sum^{\nu_{d-3}}_{\nu_{d-2} = 0}\prod^{d-2}_{j=1} \frac{1}{\nu_j+d-j-1}\\
&= (k+d)(k+d-1)b_k(d)\,.
\end{split}
\end{equation*}
After integration we get (\ref{lnhochd}).

Furthermore, it is easily seen that $b_0(d) = 1$ and $b_1(d) =
d(d-1)/2(d+1)$. To complete the proof, we verify $b_k(d) \leq d^k/2^{k-1}$
for $k\geq 2$. 
The inequality is obviously true
in dimension $d=2$. Hence let $d\geq 3$. From the identity
\begin{equation*}
\sum^k_{\nu_1 = 0}...\sum^{\nu_{d-3}}_{\nu_{d-2} = 0} 1 = {k+d-2
  \choose k}
\end{equation*}
we obtain
\begin{equation*}
\sum^k_{\nu_1 = 0}\sum^{\nu_1}_{\nu_2 = 0}...
\sum^{\nu_{d-3}}_{\nu_{d-2} = 0}\prod^{d-2}_{j=1} \frac{1}{\nu_j+d-1-j}
\leq \frac{1}{(d-2)!}{k+d-2 \choose k}\,,
\end{equation*}
which leads to
\begin{equation*}
\begin{split}
b_k(d) &\leq
\frac{d(d-1)}{(k+d)(k+d-1)}\frac{(k+d-2)...(1+d-2)}{k...1}\\
&\leq
\frac{d(d-1)}{(k+d)(k+d-1)} \Big( \frac{d}{2} \Big)^{k-1} (d-1) \leq
\frac{d^k}{2^{k-1}}\,.
\end{split}
\end{equation*}
\end{proof}

\begin{corollary}
\label{Simplex}
Let $z\in [0,1]^d$.
If $V_z \geq d\varepsilon$, then
\begin{equation}
\label{simplex}
\lambda^d(\A(z)) \leq
\frac{5}{2\,d!}\frac{\varepsilon^d}{\,V^{d-1}_z}\,.
\end{equation}
\end{corollary}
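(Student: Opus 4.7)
My plan is to invoke the explicit power series from Proposition \ref{lemmaLB3} and use the coefficient bounds it provides to truncate and estimate. Set $x := \varepsilon/V_z$; the hypothesis $V_z \geq d\varepsilon$ translates to $x \leq 1/d$. Proposition \ref{lemmaLB3} gives
\[
\lambda^d(\A(z)) \;=\; \frac{1}{d!}\,\frac{\varepsilon^d}{V_z^{d-1}}\,\sum_{k=0}^\infty b_k(d)\,x^k,
\]
so the corollary is reduced to showing that $\sum_{k=0}^\infty b_k(d) x^k \leq 5/2$ whenever $x \leq 1/d$.

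I would split the sum into three pieces, using the explicit coefficient values recorded in Proposition \ref{lemmaLB3}. The $k=0$ term contributes exactly $1$ since $b_0(d)=1$. For the $k=1$ term, using $b_1(d) = d(d-1)/(2(d+1))$ and $x \leq 1/d$, one gets $b_1(d) x \leq (d-1)/(2(d+1)) \leq 1/2$. For the tail $k \geq 2$, apply the bound $b_k(d) \leq d^k/2^{k-1}$; since $x \leq 1/d$, this gives $b_k(d) x^k \leq 2^{-(k-1)}$, and summing the geometric series $\sum_{k\geq 2} 2^{-(k-1)} = 1$.

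Adding the three contributions yields $1 + 1/2 + 1 = 5/2$, and multiplying by $\frac{1}{d!}\varepsilon^d/V_z^{d-1}$ gives the claim. No step looks like a real obstacle: the main ingredients (the explicit series and the coefficient estimates) are already packaged in Proposition \ref{lemmaLB3}, and the choice of threshold $V_z \geq d\varepsilon$ is exactly what makes the geometric tail sum comfortably below $1$ so that the constant $5/2$ comes out. The only minor care required is to verify that the series actually converges for $x \leq 1/d$, which follows from the same tail bound (and is already guaranteed on $[0, 1-\epsilon]$ by Proposition \ref{lemmaLB3}).
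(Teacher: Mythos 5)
Your proposal is correct and is essentially the derivation the paper intends: the corollary is stated without proof immediately after Proposition \ref{lemmaLB3}, and the constant $5/2$ arises exactly from the decomposition $b_0(d)=1$, $b_1(d)x\le (d-1)/(2(d+1))\le 1/2$, and $\sum_{k\ge 2} b_k(d)x^k\le\sum_{k\ge 2}2^{-(k-1)}=1$ under $x=\varepsilon/V_z\le 1/d$. The only cosmetic remark is that Proposition \ref{lemmaLB3} is stated for $d\ge 2$, so the (trivial) case $d=1$, where $\lambda^1(A_\varepsilon(z))=\varepsilon$, should be noted separately.
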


We now consider the polynomial product measure $\pi^d$.

\begin{lemma}
\label{B1}
Let $\varepsilon \in (0,1]$, and let $z\in[0,1]^d$ with $V_z
\geq \varepsilon$. Then
\begin{equation}
\label{b1}
\pi^d(\A(z)) = V^d_z - (V_z - \varepsilon)^d \sum^{d-1}_{k=0}
\frac{d^k}{k!}(-\ln(1-\varepsilon/V_z))^k  \,,
\end{equation}
and, as a function of $V_z$, $\pi^d(\A(z))$ is strictly increasing. 
\end{lemma}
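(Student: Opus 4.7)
The plan is to prove \eqref{b1} by reducing the $\pi^d$-integral to the Lebesgue case already handled in Lemma~\ref{lemmaLB1} via a coordinate-wise change of variables, and then to establish strict monotonicity of $\pi^d(\A(z))$ in $V_z$ by rewriting the right-hand side of \eqref{b1} in terms of a Gamma CDF and carrying out a carefully organised derivative computation.

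For the formula, I would use the bijection $\Psi \colon [0,1]^d \to [0,1]^d$ defined by $\Psi(u) := (u_1^{1/d}, \ldots, u_d^{1/d})$. A direct Jacobian computation shows that the push-forward of $\lambda^d$ under $\Psi$ has density $\prod_j d\, y_j^{d-1}$; that is, the push-forward is exactly $\pi^d$. Hence $\pi^d(\A(z)) = \lambda^d(\Psi^{-1}(\A(z)))$. Under $\Psi^{-1}$ (the coordinate-wise $d$-th power), the constraint $x \le z$ becomes $u \le z^d := (z_1^d, \ldots, z_d^d)$, and $V_z - V_x \le \varepsilon$ (i.e., $V_x \ge V_z - \varepsilon$) becomes $V_u \ge (V_z - \varepsilon)^d$. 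Thus $\Psi^{-1}(\A(z)) = A_{\varepsilon'}(z^d)$ with $\varepsilon' := V_z^d - (V_z - \varepsilon)^d$. Substituting into Lemma~\ref{lemmaLB1} and using the identities $V_{z^d} - \varepsilon' = (V_z - \varepsilon)^d$ and $-\ln(1 - \varepsilon'/V_{z^d}) = -d \ln(1 - \varepsilon/V_z)$ produces \eqref{b1} after a one-line simplification.

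For the monotonicity, formula \eqref{b1} shows that $\pi^d(\A(z))$ depends on $z$ only through $v := V_z$; call this function $F(v)$. I would recast it as $F(v) = v^d F_G(T)$, where $T := -d \ln(1 - \varepsilon/v)$ and $F_G(T) = 1 - e^{-T}\sum_{k=0}^{d-1} T^k/k!$ is the CDF of a Gamma distribution with shape $d$ and rate $1$. Differentiating $F$ and using $v - \varepsilon = v e^{-T/d}$, $\varepsilon = v(1 - e^{-T/d})$, and $dT/dv = -d \varepsilon/(v(v-\varepsilon))$, the inequality $F'(v) > 0$ reduces cleanly to
\[
\psi(T) \;:=\; F_G(T) - (e^{T/d} - 1)\, f_G(T) \;>\; 0 \qquad \text{for all } T > 0,
\]
where $f_G(t) := t^{d-1} e^{-t}/(d-1)!$ is the Gamma density. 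Since $\psi(0) = 0$, it suffices to prove $\psi'(T) > 0$ for $T > 0$.

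The main obstacle is the algebraic simplification of $\psi'(T)$. Using the log-derivative identity $f_G'(T)/f_G(T) = (d - 1 - T)/T$ and expanding carefully, all terms collapse and a small ``miracle'' yields
\[
\psi'(T) \;=\; \frac{(d-1)\, f_G(T)}{T}\,\bigl[(T/d - 1)\, e^{T/d} + 1\bigr].
\]
Writing $s := T/d$, the bracketed factor $(s - 1)e^s + 1$ vanishes at $s = 0$ and has derivative $s e^s > 0$ on $(0,\infty)$, so it is strictly positive for $s > 0$. The $(d-1)$ prefactor then gives strict positivity of $\psi'$ for $d \ge 2$, proving strict monotonicity. (For $d=1$ the formula \eqref{b1} reduces to the constant $\varepsilon$, so $F$ is only weakly monotone there; the statement of strict monotonicity is therefore to be understood for $d \ge 2$.)
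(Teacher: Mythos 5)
Your proof is correct, but it takes a genuinely different route from the paper's on both halves of the lemma. For the formula, the paper recomputes $\pi^d(\A(z))$ by writing it as $\int_0^\varepsilon d^d(V_z-r)^{d-1}\,\partial_r\lambda^d(A_r(z))\,dr$ and exhibiting an explicit antiderivative $F(r)$; your change of variables $\Psi(u)=(u_1^{1/d},\dots,u_d^{1/d})$ instead transports the problem to the Lebesgue case, identifying $\Psi^{-1}(\A(z))=A_{\varepsilon'}(z^d)$ with $\varepsilon'=V_z^d-(V_z-\varepsilon)^d$, so that \eqref{b1} drops out of Lemma~\ref{lemmaLB1} with no new integration; this is cleaner and makes the structural relation between \eqref{voila} and \eqref{b1} transparent. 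For the monotonicity, the roles reverse: the paper differentiates under the integral sign and reduces positivity to the one-line inequality $-\ln(1-r/V_z)-r/V_z=\sum_{k\ge 2}(r/V_z)^k/k>0$, whereas you differentiate the closed form \eqref{b1} directly via the Gamma-CDF reparametrization $F(v)=v^dF_G(T)$ and must push through the $\psi'$ computation; I checked that your identity $\psi'(T)=\tfrac{(d-1)f_G(T)}{T}\bigl[(T/d-1)e^{T/d}+1\bigr]$ and the ensuing positivity argument are correct, but this is the harder way to the same conclusion. Your closing caveat is also a genuine (minor) correction to the paper: for $d=1$ one has $\pi^1(\A(z))=\varepsilon$ identically, so the claimed strict monotonicity only holds for $d\ge 2$ (the paper's own derivative formula involves $(d-2)!$ and thus tacitly assumes this too); the application in Proposition~\ref{B2} is unaffected since $\pi^1(\A(z))=\varepsilon^1$ there anyway.
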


\begin{proof}
Let $V_z \geq \varepsilon$. 
We have
\begin{equation}
\label{transfor}
\pi^d(\A(z)) = \int_{\A(z)} d^d V^{d-1}_x \,\lambda^d(dx) 
= \int^\varepsilon_0 G(V_z, r)\,dr\,,
\end{equation}
where
\begin{equation*}
G(V_z,r) := d^d (V_z-r)^{d-1}\partial_r \lambda^d(A_r(z))\,.
\end{equation*}
From (\ref{voila}) we get for all 
$0\leq r \leq \varepsilon$
\begin{equation*}
\partial_r \lambda^d(A_r(z)) = \frac{ \big( -\ln(1-r/V_z) \big)^{d-1}}
{(d-1)!}\,.
\end{equation*}
If we define 
\begin{equation*}
F(r) := -(V_z - r)^d \sum^{d-1}_{k=0} \frac{d^k}{k!} 
(-\ln(1-r/V_z))^k\,,
\end{equation*}
then we observe that $F'(r) = G(V_z, r)$ holds. 
Thus we have $\pi^d(\A(z)) = F(\varepsilon) - F(0)$, which proves
(\ref{b1}). 
Furthermore, according to (\ref{transfor}), we get
\begin{equation*}
\partial_{V_z} \pi^d(\A(z)) = \int^\varepsilon_0 
\partial_{V_z} G(V_z,r)\,dr\,.
\end{equation*}
The integrand of the integral is positive, as the next calculation
reveals:
\begin{equation*}
\begin{split}
\partial_{V_z} G(V_z,r) &=
d^d (V_z - r)^{d-2} \frac{(-\ln(1-r/V_z))^{d-2}}{(d-2)!}
\big( -\ln(1-r/V_z) - r/V_z \big)\\
&= d^d (V_z - r)^{d-2} \frac{(-\ln(1-r/V_z))^{d-2}}{(d-2)!}
\sum^\infty_{k=2} \frac{1}{k} \Big( \frac{r}{V_z} \Big)^k > 0
\end{split}
\end{equation*}
for all $0 < r < V_z$. Thus $\partial_{V_z} \pi^d(\A(z)) > 0$, and, considered
as a function of $V_z$, $\pi^d(\A(z))$ is strictly increasing.
\end{proof}

\begin{proposition}
\label{B2}
Let $\varepsilon \in (0,1]$, and let $z\in [0,1]^d$ with $V_z \geq \varepsilon$.
Then we have the lower bound $\pi^d(\A(z)) \geq \varepsilon^d$. 
If furthermore $V_z \geq d\varepsilon$, then we have the estimate
\begin{equation*}
e^{-1} \frac{d^d}{d!} \varepsilon^d \leq \pi^d(\A(z)) \leq \frac{5}{2} 
\frac{d^d}{d!} \varepsilon^d\,.
\end{equation*}
\end{proposition}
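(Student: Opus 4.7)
The plan is to bound $\pi^d(\A(z))$ by exploiting two facts already established: the strict monotonicity of $V_z \mapsto \pi^d(\A(z))$ proven in Lemma \ref{B1}, and the Lebesgue-measure bounds on $\A(z)$ from Proposition \ref{lemmaLB3} and Corollary \ref{Simplex}. The common thread is that $\pi^d$ has Radon--Nikodym density $d^d V_x^{d-1}$ with respect to $\lambda^d$, and that on $\A(z)$ the volume $V_x$ is pinned between $V_z - \varepsilon$ and $V_z$, so $\pi^d(\A(z))$ is essentially $d^d V_z^{d-1}$ times $\lambda^d(\A(z))$ up to a manageable loss.

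For the first claim, $\pi^d(\A(z)) \geq \varepsilon^d$ whenever $V_z \geq \varepsilon$, I would invoke the monotonicity from Lemma \ref{B1} and reduce to the extremal case $V_z = \varepsilon$. In that case the defining constraint $V_z - V_x \leq \varepsilon$ is vacuous, so $\A(z)$ collapses to the full box $[0,z_1]\times\cdots\times[0,z_d]$, whose $\pi^d$-measure is $\prod_{j=1}^d z_j^d = V_z^d = \varepsilon^d$.

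For the two-sided estimate when $V_z \geq d\varepsilon$, I would use the sandwich
\begin{equation*}
d^d (V_z - \varepsilon)^{d-1}\, \lambda^d(\A(z)) \;\leq\; \pi^d(\A(z)) \;\leq\; d^d V_z^{d-1}\, \lambda^d(\A(z))
\end{equation*}
obtained by integrating the density $d^d V_x^{d-1}$. Plugging the upper bound $\lambda^d(\A(z)) \leq \tfrac{5}{2\,d!}\varepsilon^d / V_z^{d-1}$ of Corollary \ref{Simplex} into the right inequality immediately yields $\pi^d(\A(z)) \leq \tfrac{5}{2}\tfrac{d^d}{d!}\varepsilon^d$. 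For the lower bound I would use the leading-term estimate $\lambda^d(\A(z)) \geq \tfrac{1}{d!}\varepsilon^d / V_z^{d-1}$, which follows from the positivity of the coefficients $b_k(d)$ in Proposition \ref{lemmaLB3}; this gives $\pi^d(\A(z)) \geq \tfrac{d^d}{d!}\varepsilon^d (1 - \varepsilon/V_z)^{d-1}$.

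The only spot requiring a moment of care is the elementary inequality $(1-\varepsilon/V_z)^{d-1} \geq e^{-1}$ under the hypothesis $V_z \geq d\varepsilon$. I would first reduce to $(1-1/d)^{d-1}$ by monotonicity of $t \mapsto (1-t)^{d-1}$, and then apply $\ln(1-x) \geq -x/(1-x)$ at $x = 1/d$ to obtain $(d-1)\ln(1-1/d) \geq -1$, hence $(1-1/d)^{d-1} \geq e^{-1}$. I do not foresee a genuine obstacle; once the ingredients from Lemma \ref{B1}, Proposition \ref{lemmaLB3}, and Corollary \ref{Simplex} are in hand, the proof is essentially book-keeping.
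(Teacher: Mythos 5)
Your proposal is correct and follows essentially the same route as the paper's proof: monotonicity in $V_z$ from Lemma \ref{B1} reduces the first bound to the extremal case $V_z=\varepsilon$ where $\A(z)=[0,z]$, and the two-sided estimate comes from sandwiching the density $d^dV_x^{d-1}$ between $d^d(V_z-\varepsilon)^{d-1}$ and $d^dV_z^{d-1}$ and invoking Proposition \ref{lemmaLB3} and Corollary \ref{Simplex}. The only difference is that you spell out the elementary step $(1-1/d)^{d-1}\geq e^{-1}$, which the paper uses without comment.
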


\begin{proof}
Let $V_z = \varepsilon$. Then
\begin{equation*}
\pi^d(\A(z)) = \int_{[0, z]} d^d V^{d-1}_x 
\,\lambda^d(dx) = \prod^d_{i=1} z^d_i = \varepsilon^d\,.
\end{equation*}
Since $\pi^d(\A(z))$ is an increasing function of $V_z$, the first lower bound holds.
Let now $V_z \geq d\varepsilon$. Here we use the simple estimate
\begin{equation*}
d^d (V_z-\varepsilon)^{d-1} \lambda^d(\A(z)) \leq \pi^d(\A(z))
\leq d^d V_z^{d-1} \lambda^d(\A(z))\,.
\end{equation*}
Together with Proposition \ref{lemmaLB3} and Corollary \ref{Simplex} this leads 
to 
\begin{equation*}
e^{-1} \frac{d^d}{d!} \varepsilon^d \leq (1-1/d)^{d-1} \frac{d^d}{d!}\varepsilon^d 
\leq \pi^d(\A(z)) \leq  \frac{5}{2}\frac{d^d}{d!}\varepsilon^d\,.   
\end{equation*}
\end{proof}

\begin{remark}
Like $\lambda^d(\A(z))$ in Proposition \ref{lemmaLB3}, one can also
expand $\pi^d(\A(z))$ into a power series. This leads to 
\begin{equation*}
\pi^d(\A(z)) = \frac{d^d}{d!} \varepsilon^d \sum^\infty_{k=0} 
a_k(d) \Big( \frac{\varepsilon}{V_z} \Big)^k\,,
\end{equation*}
but here the coefficients $a_k(d)$ are not all positive. So we have,
e.g., $a_0(d) = 1$, but $a_1(d) = -d(d-1)/2(d+1)$. Therefore the 
power series expansion is here less useful than in the situation
of Proposition \ref{lemmaLB3}.
\end{remark}
}\end{document}